\documentclass{article}

\usepackage{arxiv}
\usepackage[utf8]{inputenc} 
\usepackage[T1]{fontenc}    
\usepackage{hyperref}       
\usepackage{url}            
\usepackage{booktabs}       
\usepackage{amsfonts}       
\usepackage{nicefrac}       
\usepackage{microtype}      
\usepackage{lipsum}
\usepackage{graphicx}
\graphicspath{ {./images/} }
\usepackage{graphicx}%
\usepackage{multirow}%
\usepackage{amsmath,amssymb,amsfonts,bm}%
\usepackage{amsthm}%
\usepackage{mathrsfs}%
\usepackage[title]{appendix}%
\usepackage{xcolor}
\definecolor{myred}{RGB}{150,0,0}

\usepackage{textcomp}%
\usepackage{float}
\usepackage[numbers]{natbib}
\usepackage{manyfoot}%
\usepackage{booktabs}%
\usepackage{algorithm}%
\usepackage{algorithmicx}%
\usepackage{algpseudocode}%
\usepackage{listings}%
\usepackage{tabularx}
\usepackage{circuitikz}
\usepackage{tikz}
\usetikzlibrary{circuits, graphs, positioning}
\usepackage{enumitem}
\hypersetup{hidelinks}
\usepackage{subcaption}

\usepackage{extarrows}

\newtheorem{theorem}{Theorem}
\newtheorem{proposition}{Proposition}
\newenvironment{restatedproposition}[1]{%
  \begingroup
  \renewcommand{\theproposition}{\ref{#1}}
  \begin{proposition}[Restated]%
}{%
  \end{proposition}%
  \addtocounter{proposition}{-1}
  \endgroup
}
\theoremstyle{thmstyletwo}%
\newtheorem{example}{Example}%
\newtheorem{lemma}{Lemma}%
\theoremstyle{thmstylethree}%
\newtheorem{definition}{Definition}%
\title{Polynomial-Time Algorithms for Setting Tight Big-M Coefficients in Transmission Expansion Planning with Disconnected Buses}

\author{
 Behnam Jabbari-Marand \\
  Edward P. Fitts Department of Industrial and Systems Engineering\\
  North Carolina State University\\
  Raleigh, NC 27606, USA \\
  \texttt{Bjabbar@ncsu.edu} \\
   \And
 Adolfo R. Escobedo \\
  Edward P. Fitts Department of Industrial and Systems Engineering\\
  North Carolina State University\\
  Raleigh, NC 27606, USA \\
  \texttt{Arescobedo@ncsu.edu}
}
\makeatletter
\def\@date{}
\makeatother
\begin{document}
\maketitle
\begin{abstract}
The increasing penetration of renewable energy and rising electricity demand are driving the need to integrate new buses and transmission lines into transmission grids. These trends are reshaping transmission expansion planning (TEP), motivating the development of effective methodologies to manage the resulting complexity. This paper introduces the longest shortest-path connection (LSPC) algorithm, a graph-based method to enhance the mixed-integer linear programming disjunctive formulation of TEP using valid inequalities (VIs). Traditional approaches for determining big-M coefficients in disconnected TEP networks typically rely on solving the computationally intensive longest path problem (LPP). In contrast, LSPC circumvents these limitations by efficiently identifying relevant power-flow paths between disconnected buses within the expansion network. We demonstrate that the VIs generated from these identified paths dominate those derived from LPP-based methods and other existing approaches.
\end{abstract}

\keywords{Transmission expansion planning\and Mixed-integer linear programming\and Valid inequalities\and New-bus integration}

\section{Introduction}\label{sec1}
Transmission expansion planning (TEP) entails adding transmission lines within and between systems to accommodate future demand growth at the lowest possible cost \cite{garver1970transmission}. The strategic importance of TEP in power systems cannot be overstated, given its long-term implications for system operations. The evolving energy landscape, marked by renewable integration, large-scale generation projects, and increasing demand from industrial facilities and data centers, has significantly increased the complexity of TEP and the need for more effective solution methods \cite{lumbreras2016new,wan2025grid}.

TEP is often formulated using a linear approximation model, namely Direct-Current Optimal Power Flow (DC-OPF), which is widely applied in power system optimization (e.g., \cite{pan2020deepopf},\cite{kargarian2016toward},\cite{minot2016parallel}). This linearization is obtained by assuming uniform voltage magnitudes, minimal angle differences, and disregarding reactive power, considering the low conductance of transmission lines \cite{horsch2018linear}. These simplifications provide an effective trade-off between simplicity and accuracy, making them suitable for TEP, where operational considerations are less critical due to the long-term planning horizon and extensive power transmission distances \cite{lumbreras2014automatic}.

The introduction of discrete decisions to DC-OPF transforms it from a linear program (LP) into a mixed-integer linear program (MILP), which is generally intractable (i.e., NP-hard \cite{oertel2014complexity}). Its complexity explains the emphasis on metaheuristics and other inexact methods (e.g., \cite{dong2025transmission},\cite{romero2005constructive},\cite{de2005transmission},\cite{abdi2021metaheuristics}), even though these approaches do not provide formal guarantees of the solution quality. Hybrid methods for DC-TEP that couple heuristics with branch-and-bound algorithms (e.g., \cite{sahraei2014performance},\cite{gopalakrishnan2012global},\cite{sousa2011heuristic}) have been explored. However, despite having theoretical guarantees, their high computational cost limits them to smaller problems, hindering their practical scalability. Benders' decomposition is another popular method for solving DC-TEP that guarantees optimal solutions (\cite{binato2001new},\cite{mohammadi2013benders},\cite{haffner2000branch},\cite{megel2016reducing}), but it suffers from slow convergence in large-scale problems.
 
A complementary exact approach is to strengthen the MILP formulation. DC-TEP is commonly formulated using big-$M$ constraints that couple line-investment decisions with physical constraints across the incident buses. The choice of the big-$M$ bounds can substantially affect formulation strength, as overly large values weaken the LP relaxation and increase the computational burden of exact methods. Cutting plane methods can tighten these bounds. Their computational benefits have been widely demonstrated in power systems optimization problems with discrete decisions, such as optimal transmission switching and unit commitment (e.g., \cite{dey2022node,kocuk2016cycle,lorca2016multistage,hedman2010co}). In contrast, they have received little attention in DC-TEP. The few works that attempt to strengthen big-$M$ bounds are either limited to already-connected networks or otherwise lead to computationally intractable procedures (e.g., \cite{skolfield2022derivation}).

Deriving tight big-$M$ bounds for candidate-line endpoints—and, more generally, for any bus pair—relies on identifying relevant power-flow paths that connect them. In an already-connected network, any path comprised of existing lines between the pair provides a valid bound. In contrast, identifying relevant paths becomes more challenging when connectivity between buses depends on which investment options are chosen. Existing approaches generally assume that obtaining a provably tight bound in these situations requires solving the longest path problem (LPP), which is NP-hard \cite{schrijver2003combinatorial}. To address this challenge, this paper exploits structural properties of DC-TEP in restricted yet practically relevant expansion settings to identify the relevant paths and thereby derive the tightest big-$M$ bounds efficiently.

The rest of the paper is organized as follows. Section \ref{sec2} presents the DC-TEP formulation and introduces background concepts. Section \ref{sec3} motivates the featured methodology, denoted as the longest shortest-path connection algorithm, to tackle situations that require new-bus integration. Section \ref{sec4} provides a detailed description of this algorithm, along with proofs of correctness and complexity. Finally, Section~\ref{sec5} concludes with a summary of the contributions and directions for future work.
\section{Modeling framework and background}\label{sec2}
This section provides the notation and underlying mathematical model that serves as the basis for the methodology developed in this work. In power systems, \emph{buses} (i.e., nodes) represent connection points for various electrical components (e.g., power plants, substations) and are linked by \textit{corridors}, which are transmission pathways between buses. This work distinguishes between \textit{established corridors}, which connect buses solely through existing transmission lines, and \textit{expansion corridors}, which incorporate candidate lines for potential network expansion. For simplicity, it is assumed that each corridor can accommodate only a single existing line or a candidate line.
\subsection{Notation overview}\label{subsec21}
\subsubsection*{Sets}
\begin{tabular}{ll}
$n$  $\in \mathcal{B}$ & Buses (i.e., nodes) \\
$(i, j)$  $\in \Omega^0$ &Established corridors: corridors containing only an established line\\
$(i, j)$  $\in \Omega^1$ &Expansion corridors: corridors containing only a candidate line\\
\end{tabular}
\subsubsection*{Parameters}
\begin{tabular}{ll}
$c_{i j}$ & Cost of installing a line in corridor $(i, j)\in\Omega^1$ \\
$c_n$ & Cost per unit of power generation at bus $n$ \\
$\overline{g}_n$ & Upper limit of power generation at bus $n$ \\
$d_n$ & Active power demand at bus $n$ \\
$\overline{\theta}_{i j}$ & Limit on the angle difference between buses $i$ and $j$ \\
$\overline{P}_{i j}$ & Maximum capacity of (candidate) line within corridor $(i, j)\in \Omega^1$\\
$\overline{P}_{i j}^0$ & Maximum capacity of (existing) line within corridor $(i, j)\in \Omega^0$\\
$x_{i j}$ & Reactance of line in corridor $(i, j)$ \\
$\sigma$ & A scaling factor for aligning generation costs with transmission investment costs
\end{tabular}
\subsubsection*{Variables}
\begin{tabular}{ll}
$P_{i j}^0$ & Active power transmitted through the existing line in corridor $(i, j)$ \\
$P_{i j}$ & Active power transmitted through the candidate line in corridor $(i, j)$ \\
$g_n$ & Active power produced by the generator at bus $n$ \\
$\theta_n$ & Voltage angle at bus $n$\\
$y_{i j}$ & $\begin{cases} 
1 & \text{If a candidate line within corridor $(i, j)$ is purchased} \\
0 & \text{Otherwise} 
\end{cases}$ 
\end{tabular}
\subsection{MILP disjunctive formulation of DC-TEP}\label{subsec22}
This paper employs the MILP disjunctive model of DC-TEP \cite{villasana1984transmission} with a single investment period. This formulation is as follows:
\begin{subequations}
\begin{align}
& \min \sum_{(i, j) \in \Omega^1} c_{i j} y_{i j}+\sum_{n \in \mathcal{B}} \sigma c_n g_n && \label{eq:tepobj}\\
&\sum_{(i, n) \in \Omega^0}\hspace{-2pt} P_{i n}^0+ \hspace{-2pt}\sum_{(i, n) \in \Omega^1}\hspace{-2pt} P_{i n} - \hspace{-2pt} \sum_{(n, i) \in \Omega^0}\hspace{-2pt} P_{n i}^0 -\hspace{-2pt} \sum_{(n, i) \in \Omega^1} \hspace{-2pt} P_{n i} +g_n=d_n && \forall n \in \mathcal{B}  \label{eq:cons1}\\
& -\overline{P}_{i j}^0 \leq P_{i j}^0 \leq \overline{P}_{i j}^0 && \forall(i, j)\in \Omega^0\label{eq:cons2} \\
& -\overline{P}_{i j} y_{i j} \leq P_{i j} \leq \overline{P}_{i j} y_{i j} && \forall(i, j)\in \Omega^1 \label{eq:cons3} \\
& x_{i j} P_{i j}^0=\left(\theta_i-\theta_j\right)&& \forall(i, j)\in \Omega^0 \label{eq:cons4} \\
& -\overline{\theta}_{ij}(1-y_{ij}) \leq x_{ij}P_{ij} - (\theta_i - \theta_j) \leq \overline{\theta}_{ij}(1-y_{ij}) && \forall (i, j)\in \Omega^1  \label{eq:cons5}\\
& 0\leq g_n \leq \overline{g}_n && \forall n \in \mathcal{B} \label{eq:cons6}\\
& y_{i j} \in\{0,1\} && \forall(i, j)\in \Omega^1 \label{eq:cons8}
\end{align}
\end{subequations}
Objective function \eqref{eq:tepobj} minimizes the total cost of adding new lines and power generation; generation costs are scaled by a factor of $\sigma$ to ensure comparability to investment costs. Constraints \eqref{eq:cons1} enforce Kirchhoff’s Current Law, also known as the flow balance equations. They ensure that at each bus, the inflow and generation are together equal to the summed outflow and demand. Constraints \eqref{eq:cons2} and \eqref{eq:cons3} are capacity limits for existing and candidate lines, respectively. Constraints \eqref{eq:cons4} and \eqref{eq:cons5} enforce Kirchhoff's Voltage Law (KVL) for existing and candidate lines, respectively, by equating the product of line reactance and power flow to the corresponding bus angle difference within a corridor. Here, the susceptance parameter $b_{ij}$ from the standard DC-OPF formulation is replaced with the reactance $x_{ij}=\frac{-1}{b_{ij}}$ to simplify the notation in the formulations. Constraints \eqref{eq:cons5} employ sufficiently large disjunctive parameters, i.e., $\overline{\theta}_{ij}$, to guarantee inequality redundancy for unconstructed corridors. Previous studies use a big-$M$ parameter for the disjunctive coefficient; this work uses the parameter $\overline{\theta}_{ij}$ to generalize its application to all bus pairs (the standard TEP formulation defines this parameter only for adjacent buses). The remaining constraints specify the domain of values for the decision variables. 
\subsection{Review of TEP formulation improvement approaches}\label{subsec23}
In constraints \eqref{eq:cons5}, the disjunctive parameter $\overline{\theta}_{ij}$ imposes an upper limit on the angle difference between buses connected by expansion corridor $(i,j)\in \Omega^1$. The selection of $\overline{\theta}_{ij}$ significantly contributes to the strength of the problem formulation. These coefficients must be sufficiently large not to cut off any integer-feasible solution. However, they should be kept as small as possible to provide tighter LP relaxations, thereby expediting solution times (and mitigating numerical issues \cite{rahmani2013study}). To elaborate, consider a pair of buses $i$ and $j$, succinctly represented as $[i,j]$, that are connected by an established corridor. A valid upper bound on the angle difference between the pair can be derived by incorporating the line capacity \eqref{eq:cons2} into the KVL constraint \eqref{eq:cons4} as
\begin{equation}
    |\theta_i-\theta_j| \leq \overline{P}^0_{ij}x_{ij}=:w_{ij}, \label{eq:capacity-reactance}
\end{equation}
from the fact that $|P^0_{ij}|\leq \overline{P}^0_{ij}$. Henceforth, we refer to the \textit{capacity-reactance product} of line $(i,j)$ (see the right-hand side of \eqref{eq:capacity-reactance}, $\overline{P}^0_{ij}x_{ij}$) as its weight and denote it by $w_{ij}$.

The above bound is valid when a transmission line is constructed between buses $i$ and $j$, but it may not apply to corridors not already connected via an established corridor. Nonetheless, it is possible to utilize \eqref{eq:capacity-reactance} to establish bounds on buses that can be connected via expansion lines. \citet{tsamasphyrou2000transmission} derive a simple bound, denoted by $\gamma$, that applies to every adjacent bus pair in $\Omega^1$ and is given by
\begin{equation}
    |\theta_i-\theta_j|\leq \sum_{(k,l)\in\Omega^0\cup \Omega^1} w_{kl}=:\gamma,\quad \forall (i,j)\in \Omega^1. \label{eq:naivebound}
\end{equation}
Upper bound $\gamma$ effectively represents a worst-case scenario where any flow traveling between $i$ and $j$ would traverse all the corridors ($\Omega^0\cup\Omega^1$) in the network. It is obtained by sequentially applying the angle-difference inequality \eqref{eq:capacity-reactance} to each corridor and summing the results.

Upper bound $\gamma$ is straightforward to compute, but its magnitude becomes excessive even for very small instances, meaning it does not provide real computational advantages. To derive tighter upper bounds, it is necessary to restrict attention to more relevant power flows between bus pairs. This entails identifying and analyzing only the relevant paths that power flow can take between buses $i$ and $j$, which may either be an \textit{established path}, denoted by $\rho^0_{ij}$, composed of existing lines, or a \textit{candidate path}, denoted by $\rho_{ij}$, consisting of expansion corridors (i.e., candidate lines) with or without established corridors (i.e., existing lines). The general approach involves efficiently identifying such paths between bus pairs, and calculating their weights—by summing the line weights along each path (see \eqref{eq:capacity-reactance})—to refine $\overline{\theta}_{ij}$. To help explain the applicability of this approach, let $G=(\mathcal{B},\Omega^0 \cup \Omega^1)$ be the \textit{expansion network} associated with the inclusion of potential investment decisions in TEP. Additionally, it is necessary to define the \textit{initial network} $G^0=(\mathcal{B},\Omega^0)$, which consists solely of existing lines.

Conventional TEP methods model expansion as the addition of individual transmission lines to an already connected network. Figure~\ref{fig:fig1} illustrates this setting, where solid edges denote established corridors, dashed edges denote expansion corridors, and all lines have unit weights. When buses $i$ and $j$ within expansion corridor $(i,j)$ are connected via an established path $\rho^0_{ij}$, a valid upper bound on their voltage-angle difference is obtained by traversing the path. That is, starting from one endpoint of the expansion corridor, $i$, and following path $\rho^0_{ij}$ to the opposite endpoint $j$, summing the inequalities \eqref{eq:capacity-reactance} creates a telescoping effect on the left-hand side, resulting in the angle difference $|\theta_i-\theta_j|$. Simultaneously, the right-hand side accumulates the weights of the traversed lines.
\begin{example}\label{ex:ex1}
    Figure~\ref{fig:fig1}a depicts expansion corridor $(i_1,i_3)$ and established path $\rho^{0}_{i_1i_3}:=\langle(i_1,i_5),(i_5,i_4),(i_4,i_n),(i_n,i_3)\rangle$. Summing the angle-difference inequalities along $\rho^{0}_{i_1i_3}$, represented by the dotted arrows, yields the angle-difference upper bound for $[i_1,i_3]$
\begin{subequations}\label{eq:telescop}
\begin{align} 
  \Big\lvert\sum_{(k,l) \in \rho^{0}_{i_1i_3}}\hspace{-5pt} (\theta_{k} - \theta_{l})\Big\rvert\leq & \;
\underbrace{|\theta_{i_1} - \theta_{i_5}|}_{\substack{\leq w_{i_1i_5}}} +  
\underbrace{|\theta_{i_5} - \theta_{i_4}|}_{\substack{\leq w_{i_5i_4}}} +  
\underbrace{|\theta_{i_4} - \theta_{i_n}|}_{\substack{\leq w_{i_4i_n}}} +  
\underbrace{|\theta_{i_n} - \theta_{i_3}|}_{\substack{\leq w_{i_ni_3}}}\Rightarrow & \\
|\theta_{i_1}-\theta_{i_3}| \leq & \; w_{i_1i_5} + w_{i_5i_4} +  w_{i_4i_n} + w_{i_ni_3} = 1+1+1+1 = 4\,. & \label{eq:angineq}
  \end{align} 
  \end{subequations}
\end{example}
\begin{figure}
    \centering
    \includegraphics[width=0.8\linewidth]{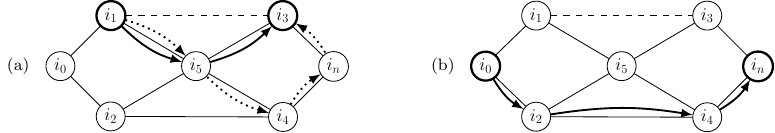}
    \caption{Identifying relevant paths between bus pairs based on their connectivity in $G^0$ and $G$: (a) Bus pair $[i_1,i_3]$ is adjacent in $G$ and connected in $G^0$; (b) Bus pair $[i_0,i_n]$ is non-adjacent in $G$ yet connected in $G^0$. Unit line weights are assumed in the diagrams.}
   \label{fig:fig1}
\end{figure}

The presence of multiple established paths connecting $i$ and $j$ results in multiple inequalities similar to \eqref{eq:angineq}. \citet{binato2001new} propose solving a shortest path problem (SPP) within $G^0$ to identify the shortest established path, denoted by $\underline{\rho}^0_{ij}$, which yields the tightest bound on $\overline{\theta}_{ij}$. In Example \ref{ex:ex1}, the shortest path $\underline{\rho}^{0}_{i_1i_3}:=\langle(i_1,i_5),(i_5,i_3)\rangle$ (indicated by solid arrows) yields an upper bound of $2$ on the angle difference for buses $i_1$ and $i_3$, dominating the previously determined bound.

\citet{skolfield2022derivation} extend the application of this approach, previously limited to adjacent buses (i.e., those with an expansion corridor between them), to derive angle-difference bounds $\overline{\theta}_{ij}$ for any $i,j\in\mathcal{B}$ that is connected in $G^0$ (i.e., buses that can reach each other via existing lines). The authors solve an SPP between $i$ and $j$ over $G^0$ and set $\overline{\theta}_{ij}$ equal to the total weight of the resulting shortest path, $w(\underline{\rho}^0_{ij})$. Figure~\ref{fig:fig1}b illustrates this for non-adjacent buses $i_0$ and $i_n$ connected in $G^0$, where the shortest path between them yields an angle-difference upper bound for $[i_0,i_n]$. This broader applicability enables the derivation of additional angle-difference bounds beyond those proposed by \citet{binato2001new}, which can be exploited to strengthen the formulation.

Existing techniques for deriving tight big-$M$ bounds rely on established paths that connect the bus pair. However, these methods are not applicable in expansion situations that incorporate prospective subnetworks of new buses. To motivate this limitation, we next describe the underlying network-design problem. Figure~\ref{fig:fig6} illustrates this problem, with prospective subnetworks shown in bold.

Deriving a tight big-$M$ bound for a bus pair is tantamount to identifying a path whose weight cannot be exceeded under any plausible network configuration; that is, its weight must be no smaller than any shortest path that could connect the pair once the investments are realized. As shown in Figure~\ref{fig:fig6}, identifying such a path can be entirely decision-dependent for certain node pairs. For example, for nodes $m_0$ and $i_1$, no established path is available, so a tight bound must be derived from candidate paths. Restricting attention to the prospective subnetwork containing $m_0$, there are already eight candidate paths from $m_0$ to $i_1$. There are also eight candidate paths from $m_0$ to $i_2$, which can combine with paths through the existing grid and other prospective subnetworks to reach $i_1$. This multiplicative growth leads to a combinatorial explosion in real-size networks. Hence, the general network-design problem is intractable.
\begin{figure}
    \centering
    \includegraphics[width=0.5\linewidth]{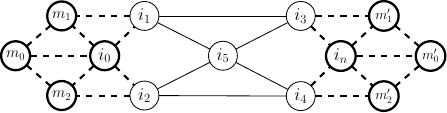}
    \caption{General network-design setting with prospective subnetworks of new buses, shown as bold nodes, where connectivity between some bus pairs depends heavily on expansion decisions.}
    \label{fig:fig6}
\end{figure}
\section{Importance of the longest shortest-path}\label{sec3}
The combinatorial difficulty of tightening big-$M$ bounds for disconnected bus pairs (e.g., $m_0$ and $m^\prime_0$ in Figure~\ref{fig:fig6}) stems from the complex (i.e., meshed) structure of the new-bus subnetworks. To ensure that no feasible solution is excluded, \citet{binato2001new} therefore propose a complete-enumeration approach that evaluates all simple candidate paths in $G$ between a disconnected bus pair $[i,j]$ and selects the longest one, denoted by $\overline{\rho}_{ij}$. This is equivalent to solving an LPP, which is NP-hard \cite{schrijver2003combinatorial}.

\citet{skolfield2022derivation} introduce a more sophisticated approach that leverages parallel paths between the bus pair to further tighten the angle-difference bound obtained with complete enumeration. However, in a disconnected setting, this approach still requires solving the LPP, since an initial bound must be available before it can be further tightened. Beyond its computational difficulty, the total weight of an LPP-based path often exceeds that of plausible power-flow paths between buses, especially in large meshed grids, leading to overly conservative angle-difference bounds (see the next section for numerical examples and theoretical guarantees).

Next, we explain how the LPP can often be circumvented in restricted, yet practical expansion situations where the initial network is connected and the new buses form relatively simple prospective subnetworks. These subnetworks may represent remote, geographically dispersed resources and substations \cite{madrigal2012transmission}. Unlike the general setting in Figure~\ref{fig:fig6}, such subnetworks are often radial, forming \emph{expansion trees} that connect to the existing grid through a small number of candidate lines \cite{ACEG2023ReadyToGo}.
\begin{definition}
    An \emph{expansion tree} $T(\mathcal{B}_T, \Omega^1_T)$ is an acyclic, connected subgraph of $G$ rooted at a bus that is separated from $G^0$ by a single expansion corridor.
\end{definition}
Figure~\ref{fig:fig0} provides a stylized representation of this structure, highlighting the expansion trees in bold. In this setting, \emph{tie lines}, i.e., candidate lines incident to the \emph{roots} of expansion trees (e.g., $i_0$ and $i_n$), enable the connection of new buses in these trees to the existing grid. This structure, together with the connectivity of the existing network, can be exploited to substantially narrow the set of relevant candidate paths.

\begin{figure}[t]
    \centering
    \includegraphics[width=0.52\linewidth]{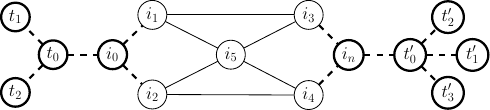}
    \caption{Illustration of a transmission expansion setting with radial prospective subnetworks. The expansion trees rooted at \(i_0\) and \(i_n\), namely \(\{(i_0,t_0),(t_0,t_1),(t_0,t_2)\}\) and \(\{(i_n,t'_0),(t'_0,t'_1),(t'_0,t'_2),(t'_0,t'_3)\}\), can be linked to the existing grid, i.e., \(\{i_1,\ldots,i_5\}\), through candidate lines \(\{(i_0,i_1),(i_0,i_2)\}\) and \(\{(i_3,i_n),(i_4,i_n)\}\), respectively.}
    \label{fig:fig0}
\end{figure}
To motivate this insight, it is necessary to formally define valid inequalities (VIs). For an integer programming problem, written succinctly as $\min\{cx: x \in \mathcal{S}\}$, with feasible region $\mathcal{S}=\{x\in \mathbb{Z}^n: Ax \leq b\}$, an inequality $\pi x \leq \pi_0$ is valid if it holds for all feasible solutions $x\in \mathcal{S}$ \cite{wolsey2014integer}. A VI is deemed effective if it reduces the feasible region of the relaxed problem represented by polyhedral set $\mathcal{P}=\{x\in \mathbb{R}^n: Ax \leq b\}$. For instance, \citet{skolfield2022derivation} introduce \textit{path-based} VIs to tighten the angle-difference bounds derived from SPP or LPP. For candidate paths $\rho_{i_0i_n}$ parallel to the original SPP- or LPP-based path connecting $i_0$ and $i_n$, these VIs can be written as\begin{equation}
|\theta_{i_0}-\theta_{i_n}|\leq w(\rho_{i_0i_n}) + (1-y_{\rho_{i_0i_n}}) (\overline{\theta}_{i_0i_n}-w(\rho_{i_0i_n})),\label{eq:simplevi}
\end{equation}
where $y_{\rho_{i_0i_n}}\in\{0,1\}$ equals $1$ if $\rho_{i_0i_n}$ is built and $0$ otherwise. Without loss of generality, path-based VI \eqref{eq:simplevi} is presented in a form simplified from the original exposition by associating a binary variable with a path; in contrast, the original VI enumerates all corridors along the path, with one binary variable for each expansion corridor. When $y_{\rho_{i_0i_n}}=1$, the tighter angle-difference bound $w(\rho_{i_0i_n})$ is enforced; when $y_{\rho_{i_0i_n}}=0$, the inequality reverts to the initial bound $\overline{\theta}_{i_0i_n}$, which equals $w(\underline{\rho}^0_{i_0i_n})$ if $i_0$ and $i_n$ are connected in $G^0$, and $w(\overline{\rho}_{i_0i_n})$, i.e., the longest path weight, otherwise. Figure~\ref{fig:fig4}a and Figure~\ref{fig:fig4}b show example candidate paths (indicated by the dotted arrows) for deriving path-based VIs when bus pair $[i_0,i_n]$ is connected (left subfigure) and disconnected in $G^0$ (right subfigure). Unit weights are assumed for all lines in the figure.

As a first contribution, we add a condition for ensuring the validity of the path-based inequalities, namely,\vspace{-6pt}
\begin{equation}
   \{\rho_{i_0i_n}\in \mathcal{C}^{G}_{i_0i_n}\mid w(\rho_{i_0i_n})<\overline{\theta}_{i_0i_n}\}, \label{eq:condition}
\end{equation}
where $\mathcal{C}^{G}_{i_0i_n}$ represents the set of all paths connecting $i_0$ to $i_n$ in the network $G$. Violating this condition, which is missing from the original expression by \citet{skolfield2022derivation}, could lead to \eqref{eq:simplevi} generating invalid inequalities. Proposition \ref{theorem:theo1} in Appendix \ref{secA1} presents the complete expression of inequality \eqref{eq:simplevi} (i.e., using expansion corridor variables) and demonstrates that it is valid only for candidate paths that meet condition \eqref{eq:condition}. In addition, Proposition \ref{prop:prop2} in Section \ref{sec4} shows that, for a candidate path $\rho_{i_0 i_n}$, the path-based VI constructed from an LPP-based initial bound is dominated by the corresponding VI initialized with the bounds developed in the next section. This proposition indicates that the effectiveness of path-based VIs diminishes as the initial angle-difference bound $\overline{\theta}_{i_0i_n}$ grows, with excessively large bounds failing to tighten the LP relaxation.

This observation underscores the significance of identifying shorter relevant paths between initially disconnected bus pairs. Indeed, LPP is unnecessary and ineffective when the existing network between two disconnected buses has a high degree of connectivity. In such situations, shorter connections can often be identified. In practice, very few buses are disconnected; hence, the number of potential corridor combinations for connecting them is small. This is supported by empirical evidence showing that, in large-scale power systems, most buses have relatively few incident lines. Namely, the degree of a bus $i$, denoted by $\deg(i)$, averages about $2.5$ \cite{albert2004structural}. The following example illustrates this point.
\begin{figure}
    \centering
    \includegraphics[width=0.8\linewidth]{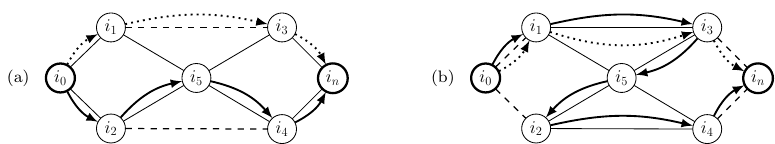}
    \caption{Tightening SPP- and LPP-based angle-difference bounds using path-based VIs: (a) Bus pair $[i_0,i_n]$ is connected in $G^0$ and a candidate path (dotted arrows) shorter than the SPP-based path exists; (b) Bus pair $[i_0,i_n]$ is disconnected in $G^0$ and a candidate path (dotted arrows) shorter than the LPP-based path exists.}
    \label{fig:fig4}
\end{figure}
\begin{example}
     Figure~\ref{fig:fig4}b illustrates the longest path between buses $i_0$ and $i_n$, namely $\overline{\rho}_{i_0i_n}:=\langle(i_0,i_1),(i_1,i_3),(i_3,i_5),(i_5,i_2),(i_2,i_4),(i_4,i_n)\rangle$, which is represented by solid arrows and has a total weight of $6$. However, assuming the construction of corridors $(i_0,i_1)$ and $(i_4,i_n)$, a shorter and more relevant path of weight $4$ can be identified: $\rho_{i_0i_n}:=\langle(i_0,i_1),(i_1,i_5),(i_5,i_4),(i_4,i_n)\rangle$. Both paths traverse the terminal corridors $(i_0,i_1)$ and $(i_4,i_n)$, but the latter uses a shorter path through the existing grid to connect the intermediate buses $i_1$ and $i_4$.
\end{example}
Building upon these observations, we introduce the longest shortest-path connection (LSPC) algorithm for deriving tight angle-difference upper bounds for new-bus integration situations. For a given disconnected pair, the algorithm circumvents the prohibitive computational effort needed to solve an LPP by evaluating connections between the disconnected buses and their neighboring buses. These neighboring buses are then linked through the shortest path within $G^0$. By merging the connections with the neighbors and the SPP-based sub-paths between them, complete paths between the original buses are formed. The longest of these shortest-path connections is then selected to establish a valid angle-difference bound.
    
Before proceeding, it is crucial to distinguish the proposed concept from the \textit{diameter} of graph $G(V,E)$, denoted as $diam(G):=\underset{u,v\in V}\max\; d(u,v)$ \cite{merris2011graph}. In words, the diameter represents the maximum shortest path distance between any two nodes in a static graph. This metric fails to guarantee a feasible bound due to the variable nature of the network in expansion planning. When applied to the expansion graph $G$, the diameter may result in overly conservative estimates by considering the bus pair with the longest shortest path among all pairs in the network. While it evaluates shortest paths between all buses, which may or may not be constructed, the LSPC algorithm focuses only on the part of the graph that is relevant to the specific disconnected bus pair.

Deriving an efficient methodology for tightening big-$M$ bounds for disconnected buses is motivated by growing renewable integration, such as ERCOT's CREZ program, which delivers remote wind power to high-demand regions in Texas \cite{kirby2007evaluating,du2023renewable}, and by the need to connect geographically dispersed energy hubs \cite{bastianel2025identification,fadly2019geographical,Nerc2023}. These settings often induce expansion trees of new buses that are disconnected from one another and from buses in the existing grid, as illustrated in Figure~\ref{fig:fig0}. Because line-investment decisions are not known \emph{a priori}, identifying relevant paths between disconnected buses is challenging. LSPC exploits the underlying network structure, namely the connectivity of the existing grid and the small bus degrees, to efficiently derive the tightest angle-difference bound for each disconnected pair.
\section{The longest shortest-path connection algorithm}\label{sec4}
This section introduces the LSPC algorithm, a novel graph-based approach for tightening the formulation of the DC-TEP problem. The algorithm has two phases, which are applicable to different expansion situations. The results from the first phase are used to derive additional bounds for the second phase.
\subsection{LSPC Phase I: Root-bus integration}\label{subsec41}
The LSPC algorithm establishes an angle-difference upper bound for disconnected bus pairs when expansion trees of new buses are connected to an existing network through a small number of candidate tie lines. Phase I of LSPC focuses on the simplest case, common to many real-world situations, where new buses are separated from the existing grid by a single expansion corridor, i.e., a candidate tie line incident to the root. To illustrate, consider bus pair $[i_0,i_n]$ in Figure~\ref{fig:LSPCillustration}, where bold arrows indicate candidate paths connecting $i_0$ and $i_n$. Note that, since the pair is disconnected in $G^0$, solving an SPP between $i_0$ and $i_n$ does not yield a valid angle-difference bound for these two buses.
\begin{figure}[t]
    \centering
    \includegraphics[width=0.8\linewidth]{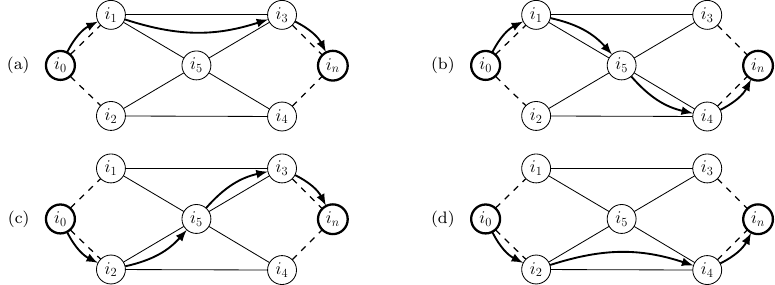}
    \caption{The four shortest-path connections between $i_0$ and $i_n$ obtained by pairing one tie line incident to $i_0$ with one tie line incident to $i_n$: LSPC forms the full paths (a) $\rho^a_{i_0i_n}:=\langle (i_0,i_1),\underline{\rho}^0_{i_1i_3},(i_3,i_n)\rangle$, (b) $\rho^b_{i_0i_n}:=\langle (i_0,i_1),\underline{\rho}^0_{i_1i_4},(i_4,i_n)\rangle$, (c) $\rho^c_{i_0i_n}:=\langle (i_0,i_2),\underline{\rho}^0_{i_2i_3},(i_3,i_n)\rangle$, and (d) $\rho^d_{i_0i_n}:=\langle (i_0,i_2),\underline{\rho}^0_{i_2i_4},(i_4,i_n)\rangle$ and selects the longest. Unit line weights are assumed in the diagrams.\vspace{-10pt}}
    \label{fig:LSPCillustration}
\end{figure}

To proceed, let $N_G(i)$ denote the set of buses adjacent to bus $i$ in $G$, and let $\mathcal{C}^{G}_{ij}$ denote the set of paths connecting $i$ and $j$ in $G$. Phase I applies to a disconnected bus pair $[i,j]$ if every \emph{reachable} neighboring pair $[n_i, n_j]\in N_G(i)\times N_G(j)$, i.e., every pair connected in the expansion grid (i.e., $\mathcal{C}^{G}_{n_in_j}\neq\emptyset$), is also connected in $G^0$ (i.e., $\mathcal{C}^{G^0}_{n_in_j}\neq\emptyset$). This connectivity in $G^0$ ensures that any candidate path between $i$ and $j$ can be represented by selecting candidate lines $(i,n_i)$ and $(n_j,j)$ and linking $n_i$ and $n_j$ through $G^0$. Specifically, a relevant candidate path between $i$ and $j$ is obtained by first identifying the shortest path between $n_i$ and $n_j$ in $G^0$, and then adding the weights of the corresponding terminal corridors, i.e., the tie lines $(i, n_i)$ and $(n_j, j)$, to the weight of the latter sub-path. Since no individual shortest-path connection is guaranteed to be constructed, each induces only a lower bound for $\overline{\theta}_{ij}$. To ensure feasibility, LSPC therefore selects the longest among them; we denote this output path as $\hat{\rho}_{ij}$.
\begin{example}\label{ex:lpplspc}
For bus pair $[i_0,i_n]$ in Figure~\ref{fig:LSPCillustration}, four shortest-path connections arise since $|N_G(i_0)\times N_G(i_n)|=4$. Each subfigure depicts one such path, formed by the starting and ending tie-line segments, with the respective shortest path through the existing network inserted between them. LSPC evaluates these four shortest-path connections and selects the longest among them; here, $\hat{\rho}_{i_0i_n}$ can be either $\rho^b_{i_0i_n}$ or $\rho^c_{i_0i_n}$, each with total weight $4$. Figure~\ref{fig:LPP_LSPC} compares the paths selected by LPP and LSPC. Although both methods select the same tie lines, $(i_0,i_1)$ and $(i_4,i_n)$, LPP connects them through the longest path in $G$, whereas LSPC connects them through the shortest path in $G^0$. Expressly, here LPP selects $\overline{\rho}_{i_0i_n}:=\langle (i_0,i_1),(i_1,i_3),(i_3,i_5),(i_5,i_2),(i_2,i_4),(i_4,i_n)\rangle$, with total weight $6$, exceeding the weight of $\hat{\rho}_{i_0i_n}$ by a factor of $1.5$.
\end{example}
Phase I iterates over all reachable neighboring bus pairs $[n_i, n_j]$ in $G$ to construct all relevant paths between buses $i$ and $j$. Nevertheless, given the typically low degree of buses in a power grid, this usually yields a small number of options. The longest among them is extracted to establish $\overline{\theta}_{ij}$. To proceed, we introduce two concepts:
\vspace{3pt}
\begin{definition}
    A node-isolated graph of $G$ with respect to node $i$ excludes all edges connecting node $i$ to its neighbors, that is, $\overline{G}_i:=(\mathcal{B},\{\Omega^0 \cup \Omega^1\} \backslash \{(i,N_G(i))\})$.
    \end{definition}
    \vspace{3pt}
   Using a \textit{node-isolated} graph ensures that only the reachable neighbors of $i$ and $j$ (i.e., $\mathcal{C}^{\overline{G}_i \cap \overline{G}_j}_{n_in_j}\neq \emptyset$) are considered in identifying candidate paths between $i$ and $j$. Absent this, if  one pair of neighbors is connected by a path in $G$, other neighboring pairs may be incorrectly deemed to be connected by the same path, reaching it through the corridors $(i, N_G(i))$ and $(N_G(j), j)$.
   \vspace{3pt}
  \begin{definition}\label{def:reach}
    The reachability set $\mathcal{C}_{ij}^N$ for buses $i$ and $j$ includes $[n_i,n_j]$, whenever $n_i$ and $n_j$ are connected in ${\overline{G}_i \cap \overline{G}_j}$, and it includes $[i,j]$ if $(i,j)\in \Omega^1$; in mathematical notation, its elements are given by \vspace{-7pt}
    \begin{equation}
        \{[n_i,n_j]\hspace{-2pt}\mid \hspace{-2pt} n_i \in N_G(i), n_j \in N_G(j),n_i\neq j,n_j\neq i, \mathcal{C}_{n_i n_j}^{\overline{G}_i \cap \overline{G}_j} \neq \emptyset\}\cup\{[i,j]\hspace{-2pt}\mid \hspace{-2pt}(i,j)\in \Omega^1\}. \label{eq:reachabilityset} 
    \end{equation}
\end{definition}
\begin{figure}[t]
    \centering
    \includegraphics[width=0.8\linewidth]{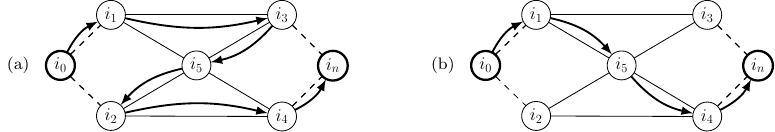}
    \caption{Power-flow paths connecting $[i_0,i_n]$ selected by LPP and LSPC: (a) LPP links the tie-line endpoints through the longest simple path in $G$; (b) LSPC links them through the shortest path in $G^0$. Unit line weights are assumed in the diagrams.}
    \label{fig:LPP_LSPC}
\end{figure}
As noted earlier, each shortest-path connection associated with $[\eta_i,\eta_j]\in \mathcal{C}_{{ij}}^N$, where $\eta_i\in\{i,n_i\}$, provides a lower bound for $\overline{\theta}_{ij}$, since the corresponding path may or may not be built in the TEP solution. Consequently, the longest of these connections yields the tightest attainable value of $\overline{\theta}_{ij}$. Specifically, the set $\mathcal{C}_{{ij}}^N$ contains two types of elements, distinguished by the number of corridors in the corresponding path. For each element $[\eta_i,\eta_j]\in \mathcal{C}_{{ij}}^N$, the associated lower bound is given by:
\begin{enumerate}[label=\roman*.]
    \item Adjacent: The path consists of a single expansion corridor, specifically $\rho_{{ij}_{[i,j]}}:=\langle(i,j)\rangle$. Therefore, $\overline{\theta}_{ij}\geq w(\rho_{{ij}_{[i,j]}})=w_{ij}$. 
  \item Non-adjacent $[n_i,n_j]$: For neighboring pairs $[n_i,n_j]$, an SPP is solved to set $\overline{\theta}_{n_i n_j}={w(\underline{\rho}^0_{n_i n_j})}$. The tie-line weights $w_{in_i}$ and $w_{n_jj}$ are then added to obtain the full candidate path $\rho_{{ij}_{[n_i,n_j]}}:=\langle (i,n_i),\underline{\rho}^0_{n_i n_j},(n_j,j)\rangle$, which yields \vspace{-8pt}
  \begin{equation}
     \overline{\theta}_{ij} \geq w(\rho_{{ij}_{[n_i,n_j]}}) = w_{in_i}+w(\underline{\rho}^0_{n_i n_j})+w_{n_jj}.  \label{eq:caseiii}
  \end{equation}
\end{enumerate}
\begin{example}\label{ex:ex3a}
        Consider the disconnected bus pair $i_0$ and $i_1$ in Figure~\ref{fig:fig2}a, where the labels on the edges indicate line weights. Given that $(i_0,i_1) \in \Omega^1$, case (i) is applicable to the element $[i_0,i_1]\in\mathcal{C}^N_{i_0i_1}$, resulting in lower bound $\overline{\theta}_{i_0 i_1}\geq w(\rho_{{i_0i_1}_{[i_0,i_1]}})=w_{i_0i_1} = 1$.
        Additionally, for the neighbor pair $[i_4,i_5] \in \mathcal{C}^{N}_{i_0i_1}$, which corresponds to case (ii), the algorithm extends the shortest path between $i_4$ and $i_5$ to create a full path that connects $i_0$ to $i_1$, giving the lower bound \vspace{-7pt}
  \[\overline{\theta}_{i_0i_1} \geq w(\rho_{{i_0i_1}_{[i_4,i_5]}}) = w_{i_0i_4}+w(\underline{\rho}^0_{i_4i_5})+w_{i_5i_1}=1+1+1=3\,.\]
    \end{example}

When bus $i$ from a disconnected pair $[i,j]$ is connected to the initial grid (e.g., bus $i_0$ in Figure~\ref{fig:fig2}a and Figure~\ref{fig:fig2}b), it is necessary to factor the existing connections between bus $i$ and $G^0$ to ensure feasibility and applicability of Phase I. 
\vspace{3pt}
\begin{example}\label{eq:ex5}
    In Figure~\ref{fig:fig2}b, if the elements $[i_4,i_4]$, $[i_4,i_5]$, $[i_2,i_4]$, and $[i_2,i_5]$ are included in $\mathcal{C}^N_{{i_0i_6}}$, the shortest candidate path, namely $\rho_{{i_0i_6}_{(1)}}:=\langle(i_0,i_2),(i_2,i_4),(i_4,i_6)\rangle$, is excluded from consideration. Instead, the longer paths $\rho_{{i_0i_6}_{(2)}} :=\langle(i_0, i_4), (i_4, i_6)\rangle$, $\rho_{{i_0i_6}_{(3)}} :=\langle(i_0, i_4), (i_4, i_2), (i_2, i_5), (i_5, i_6)\rangle$, and $\rho_{{i_0i_6}_{(4)}} :=\langle(i_0, i_2), (i_2, i_5), (i_5, i_6)\rangle$ are obtained from case (ii) of Definition \ref{def:reach}. In effect, enumerating all ordered neighboring pairs $[n_{i_0},n_{i_6}]$ connected in $G$ to construct $\mathcal{C}^N_{i_0i_6}$ impedes obtaining the tightest value for $\overline{\theta}_{i_0i_6}$. Additionally, including $[i_1,i_4]$ in $\mathcal{C}^N_{i_0i_6}$ requires an established path between $i_1$ and $i_4$. However, since $i_4$ is already reachable from $i_0$, requiring connections from any $n_{i_0} \in N_G(i_0)$ to $i_4$ becomes unnecessary and restrictive. In summary, listing all pairs $[n_{i_0},n_{i_6}]$ may limit the applicability of Phase I based on the existing connectivity of $i_0$.
\end{example}
\begin{definition}\label{def:reach2}
    For a disconnected bus pair where one bus is connected to the current network, the reachability set is refined by replacing $[n_i, n_j]$ with $[i,n_j]$ if $(i,n_i)\in \Omega^0$ or $\mathcal{C}^{G^0}_{in_j}\neq \emptyset $, and with $[n_i,j]$ if $(n_j,j)\in \Omega^0$ or $\mathcal{C}^{G^0}_{n_ij}\neq \emptyset$.
\end{definition}
\noindent To simplify the upcoming lemma and proof, connections of the form $[i, n_j]$ and $[n_i, j]$ are jointly represented as $[i, n_j]$ since $G$ and $G^0$ are undirected (i.e., $[i, j]$ is equivalent to $[j, i]$).
\begin{lemma}\label{lemma:special}
     Refining the reachability set $\mathcal{C}^{N}_{ij}$ by replacing $[n_i, n_j]$ with $[i, n_j]$ when $(i, n_i) \in \Omega^0$ or $\mathcal{C}^{G^0}_{in_j} \neq \emptyset$ ensures that all necessary connections are included to capture the relevant paths between buses $i$ and $j$ through $n_j$.
  \end{lemma}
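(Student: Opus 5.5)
We argue by decomposing an arbitrary candidate path between $i$ and $j$ that passes through $n_j$. We then show that, after the refinement of Definition~\ref{def:reach2}, the portion of that path between $i$ and $n_j$ is always captured by an element of $\mathcal{C}^{N}_{ij}$ whose shortest-path connection is well defined and valid. Fix a disconnected pair $[i,j]$ in which $i$ is connected to $G^0$. Take any simple path $\rho\in\mathcal{C}^{G}_{ij}$ whose last corridor is the tie line $(n_j,j)$. Write $\rho=\langle \rho', (n_j,j)\rangle$ with $\rho'\in\mathcal{C}^{G}_{in_j}$, and let $n_i$ be the neighbor of $i$ on $\rho'$. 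We split into two cases according to the trigger in the lemma, $(i,n_i)\in\Omega^0$ or $\mathcal{C}^{G^0}_{in_j}\neq\emptyset$.

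\emph{Case 1: $\mathcal{C}^{G^0}_{in_j}\neq\emptyset$.} Here $i$ and $n_j$ are already joined through existing lines. Applying the telescoping argument of \eqref{eq:telescop} along $\underline{\rho}^0_{in_j}$ gives $|\theta_i-\theta_{n_j}|\le w(\underline{\rho}^0_{in_j})$ in every feasible solution, whatever lines are built. Adding $|\theta_{n_j}-\theta_j|\le w_{n_jj}$, valid whenever the tie line $(n_j,j)$ is built, shows that the element $[i,n_j]$ yields the lower bound $w(\underline{\rho}^0_{in_j})+w_{n_jj}$ on $\overline{\theta}_{ij}$. This covers every path $\rho$ ending in $(n_j,j)$, since any such $\rho$ must cross $n_j$, and the relevant (shortest realizable) connection from $i$ to $n_j$ is at most the established one. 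Conversely, any element $[n_i,n_j]$ with $n_i$ different from the first hop of $\underline{\rho}^0_{in_j}$ would require a connection that need not be relevant, as Example~\ref{eq:ex5} shows. So the replacement loses no necessary path and removes only redundant or restrictive ones.

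\emph{Case 2: $(i,n_i)\in\Omega^0$.} This is a special case of Case 1 whenever $n_i$ reaches $n_j$ in $G^0$, because the path $\langle(i,n_i),\underline{\rho}^0_{n_in_j}\rangle$ lies in $G^0$. When $n_i$ does not reach $n_j$ in $G^0$, the pair $[n_i,n_j]$ could not have been used under Definition~\ref{def:reach} anyway. In that situation the existing line means $i$ is effectively part of the grid component of $n_i$, and we argue that $[i,n_j]$ records the connection exactly when it is realizable through $G^0$.

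The main obstacle is making ``all necessary connections'' precise. We would formalize it as follows: for every feasible investment $y$ under which $i$ and $j$ are connected via $n_j$, there exists an element of the refined $\mathcal{C}^{N}_{ij}$ whose associated bound is at least the shortest $i$–$j$ path weight in the built network through $n_j$. We would also need that no element yields a bound smaller than a realizable shortest connection, which is where Example~\ref{eq:ex5} enters. To handle this step, we would compare the built network's shortest path through $n_j$ with $w(\underline{\rho}^0_{in_j})+w_{n_jj}$, using the fact that $G^0$ is contained in every built network. The symmetric statement for $[n_i,j]$ follows by exchanging the roles of $i$ and $j$, since the graphs are undirected.
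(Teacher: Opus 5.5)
Your proposal is correct at the paper's level of rigor and is essentially the paper's argument with the two cases swapped. Your Case~1 is the paper's Case~2: the established sub-path $\underline{\rho}^0_{in_j}$ is present in every built network, so also requiring $(i,n_i)$ is redundant and would miss configurations in which only $(n_j,j)$ is built. Your reduction of the $(i,n_i)\in\Omega^0$ trigger to $G^0$-connectivity matches the paper's Case~1 remark that the SPP from $i$ already routes through all established neighbors.

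Two small fixes are needed. First, when $n_i$ does not reach $n_j$ in $G^0$, the pair $[n_i,n_j]$ can still belong to $\mathcal{C}^{N}_{ij}$ under Definition~\ref{def:reach}, which only asks for connectivity in $\overline{G}_i\cap\overline{G}_j$. What fails there is the Phase~I applicability check, and it fails equally for $[i,n_j]$, since $i$ and $n_i$ share a component of $G^0$. Second, the tightness half of your formalization should require that no retained element \emph{exceeds} a realizable shortest connection, not that none falls below one; Example~\ref{eq:ex5} shows unrefined elements such as $[i_4,i_4]$ overstating it.
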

\begin{figure}
    \centering
    \includegraphics[width=0.8\linewidth]{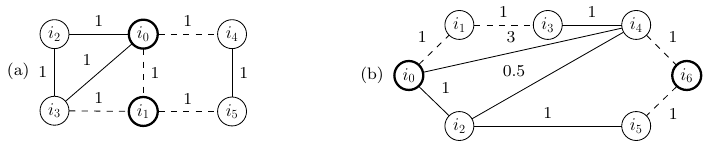}
    \caption{Examples of situations where the reachability set should be refined: (a) In $\mathcal{C}_{{i_0i_1}}^N$, $\{[i_2,i_3],[i_3,i_3]\}$ are replaced by $[i_0,i_3]$, as $(i_0,i_2), (i_0,i_3) \in \Omega^0$; (b) In $\mathcal{C}^N_{i_0i_6}$, $[i_4,i_4]$, $[i_4,i_5]$, and $[i_2,i_4]$ are replaced by $[i_0,i_4]$, and $[i_2,i_5]$ is replaced by $[i_0,i_5]$, since $(i_0,i_2),(i_0,i_4) \in \Omega^0$, and since $[i_0,i_4]$ is connected in $G^0$, $[i_0,i_4]$ also replaces $[i_1,i_4]$.}
    \label{fig:fig2}
\end{figure}
  \begin{proof}
       Pairs $[n_i, n_j]$ are replaced with $[i, n_j]$ in two cases:\\
\indent \textbf{Case 1:} $(i,n_i)\in \Omega^0$. Let $N^0(i)\subseteq N_G(i)$ denote the neighbors of bus $i$ with $(i,n_i)\in \Omega^0$. It is correct to replace $[n_i, n_j]$ with $[i, n_j]$ since the SPP accounts for traversing all neighbors $n_i \in N^0(i)$, when determining the shortest existing path from $i$ to $n_j$. When bus $i$ and at least two of its neighbors, say $n^1_i$ and $n^2_i$, form a cycle of established corridors $c:=\langle i, n^1_i, n^2_i, i \rangle$, there is a path with at least two corridors connecting each $n_i\in N^0(i)$ to $i$ by traversing through other neighbors in $N^0(i)$. In other words, this refinement ensures that the reachability set captures all paths that pass through multiple neighbors. As a result, all sub-paths in the refined reachability set are no more than one corridor away from the original buses.\\
\indent \textbf{Case 2:} $\mathcal{C}^{G^0}_{in_j}\neq \emptyset $ (i.e., bus $i$ has a path to $n_j$ in $G^0$). Assume that $n'_i\in N^0(i)$, and therefore the neighbor pair $[n'_i, n_j]\in \mathcal{C}^{N}_{ij}$ is replaced with $[i, n_j]$ due to case 1. We demonstrate that the candidate path through $[n_i,n_j]\in\mathcal{C}^{N}_{ij}$, where $(i, n_i),(n_j,j)\in \Omega^1$, should be excluded from $\mathcal{C}^{N}_{ij}$. Constructing the full path through $[n_i, n_j]$ requires adding lines across both corridors $(i, n_i)$ and $(n_j,j)$, whereas constructing the path corresponding to $[i,n_j]$ only necessitates building the corridor $(n_j, j)$. More specifically, the construction of both lines is redundant and restrictive in this case: when the path for $[n_i, n_j]$ is constructed, the path for $[i, n_j]$ is automatically formed, whereas connections where only $(n_j, j)$ is constructed are incorrectly discarded (since a path between $i$ and $j$ is still formed).
  \end{proof}
 This refinement leads to a third type of elements in $\mathcal{C}_{ij}^N$: 
\begin{enumerate}[label=\roman*.]
    \setcounter{enumi}{2} 
    \item Non-adjacent $[i,n_j]$ (when $N^0(i)\neq \emptyset$): For connections of the form $[i,n_j]$, $\overline{\theta}_{in_j}$ is obtained as $w(\underline{\rho}^0_{in_j})$ by solving the SPP for $[i,n_j]$. Then, the terminal corridor weight $w_{n_jj}$ is added to create the candidate path $\rho_{{ij}_{[i,n_j]}}$, providing the corresponding lower bound:\vspace{-6pt}
    \begin{equation}
        \overline{\theta}_{ij}\geq  w(\rho_{{ij}_{[i,n_j]}}) = w(\underline{\rho}^0_{in_j})+w_{n_jj}. \label{eq:caseii}
    \end{equation}
    \end{enumerate}
  \begin{example}
      In Example \ref{ex:ex3a}, the reachability set $\mathcal{C}^{N}_{i_0i_1}=\{[i_0,i_1],[i_2,i_3],[i_3,i_3],[i_4,i_5]\}$ is refined to $\mathcal{C}^{N}_{i_0i_1}=\{[i_0,i_1],[i_0,i_3],[i_4,i_5]\}$ by replacing $[i_2,i_3]$ and $[i_3,i_3]$ with $[i_0,i_3]$, as $(i_0,i_2),(i_0,i_3)\in \Omega^0$. The lower bound for $\overline{\theta}_{i_0i_1}$ corresponding to the element $[i_0, i_3]$ is established as \vspace{-7pt}
  \[\overline{\theta}_{i_0i_1}\geq w(\rho_{{i_0i_1}_{[i_0,i_3]}}) =  w(\underline{\rho}^0_{i_0i_3})+w_{i_3i_1}=1+1=2\,.\]
    In Example \ref{eq:ex5}, the set $\mathcal{C}^{N}_{i_0i_6}=\{[i_1,i_4],[i_1,i_5],[i_4,i_4],[i_4,i_5],[i_2,i_4],[i_2,i_5]\}$ is similarly refined to $\{[i_0,i_4],[i_0,i_5]\}$. This yields lower bounds of 2.5 and 3 for $\overline{\theta}_{i_0i_6}$, and thus $\overline{\theta}_{i_0i_6}=w(\hat{\rho}_{i_0i_6})=3$, which is tighter than the LPP-based value $w(\overline{\rho}_{i_0i_6})=5.5$.
  \end{example}
\begin{algorithm}
    \caption{Longest Shortest-Path Connection Algorithm (LSPC) - Phase I}\label{alg:LSPC1}
  \begin{algorithmic}[1]
        \State {$\textbf{Inputs: } G,G^0,[i,j],\gamma$}
        \State {\textbf{Output:} A tighter upper bound on the angle difference between buses $i$ and $j$}
               \State $\overline{\theta}_{ij} \gets 0$            \Comment{Initialize $\overline{\theta}_{ij}$}
               \State \textbf{Construct} $\mathcal{C}^N_{ij}$ \Comment{Construct the reachability set by applying Definition \ref{def:reach}}
               \If {$\exists [n_i,n_j]\in \mathcal{C}^N_{ij}\mid (i,n_i)\in \Omega^0 \textbf{ or } \mathcal{C}^{G^0}_{in_j}\neq \emptyset$}
                   \State \textbf{Refine} $\mathcal{C}^N_{ij}$  \Comment{Refine the reachability set by applying Definition \ref{def:reach2}}
               \EndIf
                \If {$\mathcal{C}^{G^0}_{[\eta_i,\eta_j]} \neq \emptyset,\forall\ [\eta_i,\eta_j] \in \mathcal{C}^N_{ij}\backslash \{[i,j]\}$}  \Comment{Verify Phase I's applicability to $[i, j]$}
                    \ForAll {$[\eta_i,\eta_j]\in\mathcal{C}^N_{ij}$}   \Comment{Compute $w(\rho_{{ij}_{[\eta_i,\eta_j]}})$ using cases (i),(ii), and (iii)}
                        \State $\overline{\theta}_{ij} \gets \max\{\overline{\theta}_{ij},w(\rho_{{ij}_{[\eta_i,\eta_j]}})\}$
                    \EndFor
                \Else
                    \State $\overline{\theta}_{ij} \gets \gamma$            \Comment{Phase I fails}
                \EndIf
    \end{algorithmic}
\end{algorithm}
 The following theorem formally establishes the tightest angle-difference bound for an initially disconnected bus pair by computing the weights of candidate paths associated with the three types of elements of the reachability set.
\begin{theorem} \label{lemma:lemma_LSPC}
For a disconnected bus pair \([i,j]\), if every pair in \(\mathcal{C}_{ij}^N\setminus\{[i,j]\}\) is connected in \(G^0\), then the tightest path-based angle-difference bound is obtained by the longest shortest-path connection in the refined set \(\mathcal{C}_{ij}^N\) (considering cases i, ii, and iii), namely
\begin{equation}\label{eq:LSPC1result}
    |\theta_{i}-\theta_{j}| \leq \underset{{[\eta_i,\eta_j]}\in \mathcal{C}^N_{{ij}}}{\max} \{w({\rho_{{ij}_{[\eta_i,\eta_j]}}})\}=:w(\hat{\rho}_{ij}),
\end{equation}
where the weight of each shortest-path connection $\rho_{{ij}_{[\eta_i,\eta_j]}}$ is computed as: \vspace{-7pt}
\begin{subequations}\label{eq:reachability}
\begin{flalign}
  &   w(\rho_{{ij}_{[\eta_i,\eta_j]}})=  w_{ij}, && \forall \; [\eta_i,\eta_j] = [i,j] \\
   & w(\rho_{{ij}_{[\eta_i,\eta_j]}})=  w(\underline{\rho}^0_{in_j}) +w_{n_jj},  && \forall \; [\eta_i,\eta_j]\in\{[i,n_j]\} \\
   &   w(\rho_{{ij}_{[\eta_i,\eta_j]}})= w_{in_i} + w(\underline{\rho}^0_{n_in_j}) + w_{n_jj}, && \forall \;[\eta_i,\eta_j]\in\{[n_i,n_j]\}. \label{eq:opercount0}
\end{flalign}
\end{subequations}
\end{theorem}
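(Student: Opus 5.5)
The theorem makes two claims, and I would prove them separately. First, \emph{validity}: for every integer-feasible TEP solution, $|\theta_i-\theta_j|\le w(\hat{\rho}_{ij})$. Second, \emph{tightness}: no smaller constant can serve as a path-based bound, because each of the values in \eqref{eq:reachability} is attained by some plausible network configuration.

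\textbf{Validity.} Fix a feasible solution $(y,\theta,P)$. If $i$ and $j$ are not connected in the built network, the angle difference is not constrained through any path. In that case the KVL constraints \eqref{eq:cons5} for $(i,j)$ are relaxed, and the pair needs no bound from flows. I would treat this case via the convention that $\overline{\theta}_{ij}$ only needs to dominate the shortest built path. This mirrors how the paper justifies the SPP-based bound of \citet{binato2001new} for adjacent pairs.

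Otherwise, take a simple built path $\pi$ from $i$ to $j$. Let $(i,a)$ be its first corridor and $(b,j)$ its last. There are three cases.
\begin{itemize}
\item If $\pi=\langle(i,j)\rangle$, then $(i,j)\in\Omega^1$ is built, and \eqref{eq:capacity-reactance} applied to that line gives $|\theta_i-\theta_j|\le w_{ij}$, which is case (i).
\item If $(i,a)\in\Omega^0$, or $i$ reaches $b$ in $G^0$, then by Lemma~\ref{lemma:special} the refined set contains $[i,b]$. The sub-path from $i$ to $b$ runs in $\overline{G}_i\cap\overline{G}_j$; by the applicability hypothesis $i$ and $b$ are connected in $G^0$. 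Because $G^0$ is always built, telescoping \eqref{eq:capacity-reactance} along $\underline{\rho}^0_{ib}$ and then across $(b,j)$ yields $|\theta_i-\theta_j|\le w(\underline{\rho}^0_{ib})+w_{bj}$, which is case (iii).
\item Otherwise, the interior of $\pi$ avoids $i$ and $j$, so $[a,b]\in\mathcal{C}^N_{ij}$ by Definition~\ref{def:reach}. The hypothesis gives a shortest established path $\underline{\rho}^0_{ab}$. Telescoping along $(i,a)$, then $\underline{\rho}^0_{ab}$, then $(b,j)$ gives case (ii).
\end{itemize}
In all three cases $|\theta_i-\theta_j|$ is bounded by some $w(\rho_{ij_{[\eta_i,\eta_j]}})$, and hence by the maximum $w(\hat{\rho}_{ij})$. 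The key point is that we replace the actual built sub-path by the shortest established one. This is legitimate because the angles are shared: the established path always exists in every solution, so its inequality holds regardless of which candidate lines are built.

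\textbf{Tightness.} For the maximizing element $[\eta_i,\eta_j]$, I would construct the configuration that builds only the terminal tie lines of $\rho_{ij_{[\eta_i,\eta_j]}}$ and no other candidate corridor incident to $i$ or $j$. In that network, every $i$–$j$ path passes through $\eta_i$ and $\eta_j$, so the best path-based bound available is exactly the tie weights plus $w(\underline{\rho}^0_{\eta_i\eta_j})$. Any constant smaller than $w(\hat{\rho}_{ij})$ would therefore be unjustified by paths in that configuration, and could cut off feasible angle assignments that realize the bound. Those assignments come from pushing flows to capacity along the shortest path, with the other candidate lines unbuilt.

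\textbf{Main obstacle.} The delicate step is the second case of validity: showing that the refinement in Lemma~\ref{lemma:special} never discards a pair that is needed. Concretely, every built path must map to some element of the refined set. I would handle this by choosing, on $\pi$, the \emph{last} vertex reachable from $i$ in $G^0$ as the anchor, and checking that this vertex is either $i$ itself or adjacent to $j$ via a candidate line. The realizability argument in the tightness part may also need care, since other built paths could force a shorter bound. I would restrict the claim to the built configuration described above, in keeping with the "path-based" qualifier in the statement.
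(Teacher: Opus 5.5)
Your proof is correct in substance, but it is organized differently from the paper's.

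\textbf{How the paper argues.} It works element by element. For each member of the refined set $\mathcal{C}^N_{ij}$, it telescopes \eqref{eq:capacity-reactance} under the assumption that that element's terminal corridors are built. It then reads each resulting weight as a lower bound that $\overline{\theta}_{ij}$ must respect and takes the maximum. Tightness is asserted on the grounds that SPP sub-paths and single tie lines cannot be shortened. For the claim that the refined set captures every relevant path, the paper relies on Lemma~\ref{lemma:special}.

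\textbf{How you argue.} You quantify over built paths instead. You map an arbitrary built simple path $\pi$, via its terminal corridors $(i,a)$ and $(b,j)$, to $[i,j]$, $[i,b]$ or $[a,b]$. This is exactly the covering (validity) step that the paper leaves implicit. Your witness configuration, in which only the maximizing element's tie lines are built, turns the paper's ``each connection induces a lower bound'' remark into an explicit necessity argument. The paper's route is shorter and mirrors Algorithm~\ref{alg:LSPC1}; yours cleanly separates sufficiency from necessity.

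\textbf{A case that needs patching.} In your third bullet, $[a,b]$ belongs to the unrefined set. It survives Definition~\ref{def:reach2} only if the mirror-image conditions $(b,j)\in\Omega^0$ and $\mathcal{C}^{G^0}_{aj}\neq\emptyset$ also fail. If either holds, $[a,b]$ has been replaced by $[a,j]$, and you must use the symmetric case (iii) bound $w_{ia}+w(\underline{\rho}^0_{aj})$. That bound is valid because $(i,a)$ lies on $\pi$ and is therefore built, and $a$ and $j$ are connected in $G^0$ by the applicability hypothesis.

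\textbf{Remaining remarks.} With that symmetric bullet added, your three-way split already covers every built path. The ``anchor'' argument in your last paragraph is therefore unnecessary; it does hold under the hypothesis, but it adds nothing. Your handling of configurations with no built $i$--$j$ path, and your choice to stop at a path-based notion of tightness rather than prove realizability, match the level at which the paper itself argues.
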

\begin{proof}
 To ensure the reachability set $\mathcal{C}^N_{{ij}}$ captures all relevant paths between buses \(i\) and \(j\), all ordered neighboring pairs $[n_i,n_j]$ connected in $G$, as well as the pair $[i,j]$ when $(i,j)\in \Omega^1$, are enumerated by applying \eqref{eq:reachabilityset}.
 To extend the applicability of the LSPC algorithm and derive the tightest upper bound, $\mathcal{C}^N_{ij}$ is refined following Definition \ref{def:reach2}, as established by Lemma \ref{lemma:special}. Assuming all $[\eta_i, \eta_j] \in \mathcal{C}^N_{ij} \setminus \{[i, j]\}$ are connected in $G^0$, the weight of the candidate path corresponding to $[\eta_i, \eta_j] \in \mathcal{C}^N_{ij}$ is calculated as follows:
\newline\indent \textbf{Case i:} For $[i, j] \in \mathcal{C}^N_{ij}$, if a line is established along the expansion corridor $(i,j)$, then inequality \eqref{eq:capacity-reactance} yields\vspace{-5pt}
\begin{equation}
    \left|\theta_i-\theta_j\right| \leq  w(\rho_{{ij}_{[i,j]}}) = w_{ij}. \label{eq:up1}
\end{equation}
\indent \textbf{Case ii:} For $[n_i,n_j]\in \mathcal{C}^N_{{ij}}$, we employ the SPP to determine their angle-difference bounds $\overline{\theta}_{n_i n_j}$. By extending the terminal expansion corridors (with weights $w_{i n_i}$ and $w_{n_j j}$) on both sides of each SPP-based sub-path, a candidate path between $i$ and $j$ is formed. Assuming the construction of both $(i,n_i)$ and $(n_j,j)$, the path is added to the network, yielding the angle-difference inequality
\begin{subequations}
    \begin{flalign} 
  \left|\theta_i-\theta_j\right| \leq &\quad  \underbrace{|\theta_{i} - \theta_{n_i}|}_{\substack{\leq w_{in_i}}} +  \underbrace{|\theta_{n_i} - \theta_{n_j}|}_{\substack{\leq w(\underline{\rho}^0_{n_i n_j})}} +  \underbrace{|\theta_{n_j} - \theta_{j}|}_{\substack{\leq w_{n_jj}}} \label{eq:up21} \\[-2pt]
 \Rightarrow |\theta_{i}-\theta_{j}| \leq & \quad w_{in_i}+ w(\underline{\rho}^0_{n_i n_j}) +w_{n_jj}=w(\rho_{{ij}_{[n_i,n_j]}}).  \label{eq:up22}
  \end{flalign}
\end{subequations}
\indent \textbf{Case iii:} For the candidate paths represented by $[i, n_j]$, constructing the corridor $(n_j, j)$ extends the SPP-based sub-path from $i$ to $n_j$ and is required to guarantee that
\begin{subequations}
    \begin{flalign} 
  \left|\theta_i-\theta_j\right| \leq &\quad   \underbrace{|\theta_{i} - \theta_{n_j}|}_{\substack{\leq w(\underline{\rho}^0_{in_j}})} +  \underbrace{|\theta_{n_j} - \theta_{j}|}_{\substack{\leq w_{n_jj}}}\label{eq:up31} \\[-3pt]
 \Rightarrow |\theta_{i}-\theta_{j}| \leq & \quad  w(\underline{\rho}^0_{in_j}) +w_{n_jj}=w(\rho_{{ij}_{[i,n_j]}}).\label{eq:up32} 
  \end{flalign}
\end{subequations}
Since inequalities \eqref{eq:up1},\eqref{eq:up22}, and \eqref{eq:up32} are derived from prospective paths not yet included in the network, it is essential to ensure that $\overline{\theta}_{ij} \geq \underset{[\eta_i, \eta_j] \in \mathcal{C}^N_{ij}}{\max} w(\rho_{{ij}_{[\eta_i, \eta_j]}})$. To maintain the validity and tightness of $\overline{\theta}_{ij}$, it must be at least as large as the maximum lower bound, thereby establishing inequality \eqref{eq:LSPC1result}.

To show that inequality \eqref{eq:LSPC1result} provides the tightest angle-difference bound for $[i,j]$, note that any smaller bound would require a shorter path for some pair in $\mathcal{C}^N_{ij}$. This cannot occur since, by Lemma~\ref{lemma:special}, the reachability set retains only the necessary connection cases, each intermediate sub-path is obtained by solving an SPP, and each full path appends at most one candidate line at each end, whose weights cannot be further tightened. Hence, the longest shortest-path connection yields the tightest attainable path-based bound, and LSPC is exact under the stated connectivity condition when no additional information about the construction of expansion corridors is available.
\end{proof}
Having shown that LSPC yields tight angle-difference bounds, we next prove that the resulting path-based VIs dominate their LPP-based counterparts.
\begin{proposition}\label{prop:prop2}
Let \(\rho_{ij}\) be a candidate path connecting buses \(i\) and \(j\) in \(G\). Since the path-based VI associated with \(\rho_{ij}\) cannot enforce an angle-difference bound smaller than \(w(\rho_{ij})\), the relevant comparison region is \( |\theta_i-\theta_j| \geq w(\rho_{ij}) \). Over this region, the path-based VI constructed using \(w(\hat{\rho}_{ij})\) dominates those derived with LPP or with any initial bound greater than \(w(\hat{\rho}_{ij})\).
\end{proposition}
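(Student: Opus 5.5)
The plan is to compare the path-based VIs directly, as functions of the same path variable, using the simplified form \eqref{eq:simplevi}. Fix a candidate path $\rho_{ij}$ satisfying condition \eqref{eq:condition} with respect to both initial bounds under comparison. Let $M\in\{w(\hat{\rho}_{ij}),\,w(\overline{\rho}_{ij})\}$ or, more generally, let $M\geq w(\hat{\rho}_{ij})$ denote the initial bound. The VI then reads
\[
|\theta_i-\theta_j|\leq f_M(y):=w(\rho_{ij})+(1-y_{\rho_{ij}})\bigl(M-w(\rho_{ij})\bigr).
\]
First, I will invoke Theorem~\ref{lemma:lemma_LSPC} to guarantee that $w(\hat{\rho}_{ij})$ is itself a valid initial bound. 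I will also note that $w(\hat{\rho}_{ij})\leq w(\overline{\rho}_{ij})$. This holds because every LSPC connection $\rho_{ij_{[\eta_i,\eta_j]}}$ is a simple path in $G$, so its weight is at most the longest simple path weight. Hence the LPP-based VI is the special case $M=w(\overline{\rho}_{ij})$ of the general claim.

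Next, I will show pointwise dominance on the box $y_{\rho_{ij}}\in[0,1]$. The map $M\mapsto f_M(y)$ is nondecreasing because $1-y_{\rho_{ij}}\geq 0$. Therefore, for $M\geq w(\hat{\rho}_{ij})$ and any relaxed $y$, we get $f_{w(\hat{\rho}_{ij})}(y)\leq f_M(y)$. It follows that the feasible set $\{(\theta,y): |\theta_i-\theta_j|\leq f_{w(\hat\rho_{ij})}(y)\}$ is contained in the set defined by $f_M$. The difference $f_M-f_{w(\hat\rho_{ij})}=(1-y)(M-w(\hat{\rho}_{ij}))$ is strictly positive whenever $y<1$ and $M>w(\hat{\rho}_{ij})$. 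Consequently, any fractional LP point with $w(\hat\rho_{ij})+\epsilon$-type slack cut by the LSPC VI is not cut by the LPP one, which gives strict dominance.

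The restriction to $|\theta_i-\theta_j|\geq w(\rho_{ij})$ will be handled explicitly. Below that threshold, both VIs are inactive for every $y\in[0,1]$, since $f_M(y)\geq w(\rho_{ij})$ by condition \eqref{eq:condition}. They therefore coincide there and comparison is vacuous. On the region itself, I will rewrite each VI as a lower bound on $1-y_{\rho_{ij}}$:
\[
1-y_{\rho_{ij}}\geq \frac{|\theta_i-\theta_j|-w(\rho_{ij})}{M-w(\rho_{ij})}.
\]
The right-hand side is nonnegative and decreasing in $M$. This shows that the LSPC-based VI forces $y_{\rho_{ij}}$ further from $1$, and hence closer to the integral decision, than the LPP-based one.

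The main obstacle is the validity bookkeeping rather than the algebra. If $w(\rho_{ij})$ lies between $w(\hat{\rho}_{ij})$ and $w(\overline{\rho}_{ij})$, the path $\rho_{ij}$ satisfies \eqref{eq:condition} only for the LPP bound, so no LSPC-based VI exists for it. I plan to treat this case by observing that the initial LSPC bound $|\theta_i-\theta_j|\leq w(\hat{\rho}_{ij})$ already lies below $f_M(y)$ for all $y$, since $f_M(y)\geq w(\rho_{ij})>w(\hat\rho_{ij})$. The LSPC formulation therefore still dominates in that case.
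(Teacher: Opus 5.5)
Your proposal is correct and follows essentially the paper's argument: your rearranged bound $1-y_{\rho_{ij}}\ge \frac{|\theta_i-\theta_j|-w(\rho_{ij})}{M-w(\rho_{ij})}$ is the single-variable version of the paper's comparison of $N_e(\rho_{ij})+\frac{w(\rho_{ij})-|\theta_i-\theta_j|}{\overline{\theta}_{ij}-w(\rho_{ij})}$ on $|\theta_i-\theta_j|\ge w(\rho_{ij})$, and your fallback to $|\theta_i-\theta_j|\le w(\hat{\rho}_{ij})$ when no LSPC-based VI exists mirrors the paper's Case~1. Your version is slightly more complete, though one claim needs weakening:
\begin{itemize}
\item It also covers $w(\hat{\rho}_{ij})<w(\rho_{ij})<w(\overline{\rho}_{ij})$, which the paper's two cases skip.
\item Monotonicity of $f_M$ in $M$ gives dominance for all relaxed $y$ without the region restriction, and this carries over to the complete form since $N_e(\rho_{ij})-\sum\mathbb{I}_{kl}y_{kl}\ge 0$.
\item You should not claim that \emph{every} point cut by the LSPC VI escapes the LPP VI---only those with $f_{w(\hat{\rho}_{ij})}(y)<|\theta_i-\theta_j|\le f_M(y)$ do.
\end{itemize}
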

\begin{proof}
    See Appendix \ref{secA2}.
\end{proof}
We now complement the results by analyzing the computational complexity of deriving the LSPC-based bounds.\smallskip
\begin{proposition}
Let \(n := |\mathcal{B}|\), \(m := |\Omega^0\cup\Omega^1|\), and assume that \(P(\deg(i)>K)\sim \exp(-0.5K)\) for all \(i\in\mathcal{B}\) and \(K\in\mathbb{N}\). The time complexity of deriving \(\overline{\theta}_{ij}\) using LSPC Phase I is \(\mathcal{O}(m\log n)\).
\end{proposition}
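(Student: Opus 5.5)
The plan is to bound the cost of each step of Algorithm~\ref{alg:LSPC1} separately and show that everything reduces to a constant number of single-source shortest-path computations, each costing $\mathcal{O}(m\log n)$ with Dijkstra's algorithm on a binary heap. The line weights $w_{kl}=\overline{P}^0_{kl}x_{kl}$ are nonnegative, so Dijkstra applies. The degree assumption will be used to show that $\deg(i)$ and $\deg(j)$ are $\mathcal{O}(1)$ with overwhelming probability. Since $P(\deg(i)>K)\sim e^{-0.5K}$, the expected degree $\mathbb{E}[\deg(i)]=\sum_{K\ge 0}P(\deg(i)>K)$ is a convergent geometric series, hence a constant (consistent with the empirical mean $\approx 2.5$). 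Consequently $|N_G(i)|\cdot|N_G(j)|$ has bounded expectation, using independence or Cauchy--Schwarz with the bounded second moment, which also follows from the exponential tail. I will state the result in this expected/typical sense, treating $\deg(i),\deg(j)\le K$ for a constant $K$.

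\textbf{Step 1, constructing $\mathcal{C}^N_{ij}$.} I first build $\overline{G}_i\cap\overline{G}_j$ by deleting the at most $\deg(i)+\deg(j)$ incident edges, which takes $\mathcal{O}(m)$. One BFS/DFS, or a union--find pass, labels the connected components in $\mathcal{O}(n+m)$. Each test $\mathcal{C}^{\overline{G}_i\cap\overline{G}_j}_{n_in_j}\neq\emptyset$ then becomes an $\mathcal{O}(1)$ label comparison, over $\mathcal{O}(\deg(i)\deg(j))=\mathcal{O}(1)$ pairs. Checking $(i,j)\in\Omega^1$ is $\mathcal{O}(\deg(i))$.

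\textbf{Step 2, refinement and the applicability check.} A component labelling of $G^0$ in $\mathcal{O}(n+m)$ answers every connectivity query of the form $\mathcal{C}^{G^0}_{in_j}\neq\emptyset$ or $\mathcal{C}^{G^0}_{\eta_i\eta_j}\neq\emptyset$ in $\mathcal{O}(1)$. Testing $(i,n_i)\in\Omega^0$ is $\mathcal{O}(\deg(i))$. So Definition~\ref{def:reach2} and the check on line 8 cost $\mathcal{O}(n+m)$ in total.

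\textbf{Step 3, computing the weights in Theorem~\ref{lemma:lemma_LSPC}.} Here lies the main cost, and the point needing care is avoiding one SPP per pair. Every distance required has a source in the set $\{i\}\cup N_G(i)$, which has size at most $1+\deg(i)$. I will therefore run Dijkstra on $G^0$ once from $i$ and once from each $n_i\in N_G(i)$. Each run costs $\mathcal{O}((n+m)\log n)$, and this returns $w(\underline{\rho}^0_{in_j})$ and $w(\underline{\rho}^0_{n_in_j})$ for all $n_j$ simultaneously. Adding the tie-line weights and taking the maximum over the $\mathcal{O}(1)$ pairs is then constant work. The total is $(1+\deg(i))\cdot\mathcal{O}((n+m)\log n)$.

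To conclude, I sum the steps and substitute $\deg(i)=\mathcal{O}(1)$. Because $G^0$ is connected in the Phase~I setting, $m\ge n-1$, so $n+m=\mathcal{O}(m)$ and the overall bound is $\mathcal{O}(m\log n)$. The delicate part is the probabilistic step. If the statement is read in expectation, linearity plus the geometric-tail bound on $\mathbb{E}[\deg(i)]$ suffices. If it is read with high probability, I would instead take $K=c\log n$, giving failure probability $n^{-c/2}$. That choice would introduce an extra $\log n$ factor, so I would adopt the expected-degree reading to match the stated bound.
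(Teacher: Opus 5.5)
Your argument is essentially the paper's: the dominant cost is one heap-based Dijkstra run on $G^0$ per source taken from the neighborhood of one endpoint, followed by constant work per pair, with the exponential degree tail used to treat $\deg(i),\deg(j)$ as $\mathcal{O}(1)$. The paper runs Dijkstra from the smaller of $N_G(i)$ and $N_G(j)$ and gets $\mathcal{O}(\min\{\deg(i),\deg(j)\}\,m\log n+\deg(i)\deg(j))$; your extra source $i$ for the refined pairs $[i,n_j]$ only changes constants, and your $\mathcal{O}(n+m)$ accounting for building and refining $\mathcal{C}^N_{ij}$, together with your expected-value reading of the degree assumption, is more careful than the paper, which skips the former and simply asserts that degrees are effectively constant.

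One slip needs fixing. $G^0$ is not connected in the Phase I setting: the pair $[i,j]$ is disconnected in $G^0$ by hypothesis, and new buses have no existing lines. Obtain $n+m=\mathcal{O}(m)$ instead from connectivity of the expansion network $G$, which gives $m\ge n-1$, or restrict the component labelling and the Dijkstra runs to the components containing the relevant sources.
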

\begin{proof}
In the worst case, \((i,j)\in\Omega^1\), and the reachability set \(\mathcal{C}^N_{ij}\) includes every neighboring pair \([n_i,n_j]\in N_G(i)\times N_G(j)\), all of which are connected in \(G^0\), so that Phase~I is applicable. LSPC solves an SPP in \(G^0\) from each $n_i\in N_G(i)$ when $|N_G(i)|\leq|N_G(j)|$, and from each $n_j\in N_G(j)$ otherwise. Using a heap-based implementation of Dijkstra's algorithm \cite{cormen2022introduction}, this takes \(\mathcal{O}(\min\{\deg(i),\deg(j)\}\, m\log n)\) time. For each \([n_i,n_j]\in N_G(i)\times N_G(j)\), the algorithm then constructs a shortest-path connection $\rho_{ij_{[n_i,n_j]}}$ by appending the terminal lines \((i,n_i)\) and \((n_j,j)\) to \(\underline{\rho}^0_{n_i n_j}\), which requires constant time per pair. Over all such pairs, this contributes \(\mathcal{O}(\deg(i)\deg(j))\). The additional check for \((i,j)\in\Omega^1\) contributes only constant time. Therefore, the overall time complexity of LSPC Phase~I is $\mathcal{O}\!\left(\min\{\deg(i),\deg(j)\}\, m\log n + \deg(i)\deg(j)\right)$. Under the empirically observed exponential decay of the bus-degree distribution in practical power grids \cite{albert2004structural}, bus degrees are effectively \(\mathcal{O}(1)\), yielding an overall complexity of \(\mathcal{O}(m\log n)\).
\end{proof}
\begin{figure}
        \centering
        \includegraphics[width=1\linewidth]{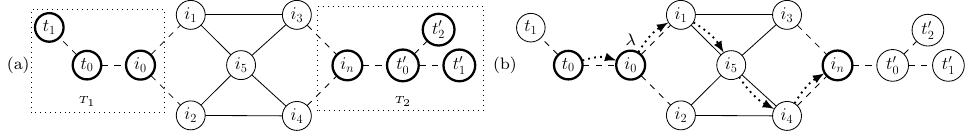}
       \caption{Integration of expansion trees into the grid: (a) buses in expansion trees $T_1$ and $T_2$ before integration; (b) Phase II combines the Phase I-based sub-paths $\hat{\rho}_{t_0i_0}:=\langle(t_0,i_0)\rangle$ and $\hat{\rho}_{i_0i_n}:=\langle(i_0,i_1),(i_1,i_5),(i_5,i_4),(i_4,i_n)\rangle$ to link $t_0$ to $i_n$ via the intermediary bus $i_0$.}
        \label{fig:fig3}
    \end{figure}
\subsection{LSPC Phase II: Tree-structured bus integration}\label{subsec42}
LSPC Phase II extends Phase I to disconnected bus pairs involving new buses located beyond the root of an expansion tree. The key observation is that any path to such a bus must pass through the tree root—whose connection to the existing grid is obtained through Phase I—and then through the intermediary buses along the unique tree path to the target bus. Formally, Phase II applies to disconnected buses $i$ and $j$ when there exists an \emph{intermediary bus} $\lambda$ such that relevant paths from $i$ to $\lambda$ and from $\lambda$ to $j$, with corresponding bounds $\overline{\theta}_{i\lambda}$ and $\overline{\theta}_{\lambda j}$, have already been identified.
\vspace{3pt}
\begin{example}
    In Figure~\ref{fig:fig3}a, $T_1$ and $T_2$ exemplify expansion trees, where the buses in these trees illustrate the applicability of Phase II. In this network, when power flows into bus $i_n$ from $i_3$ or $i_4$, the only pathway for the power to reach bus $t'_1$ is through bus $t'_0$. As a result, once a path to $i_n$ is built, it can be easily extended along the tree structure rooted at $i_n$ to reach additional new buses located deeper in the tree.\\
    Consider now the buses $t_0$ and $t'_0$. Once power flows into bus $i_0$ from $t_0$, the sub-network between $i_0$ and $i_n$ can be bypassed through the Phase I-based path $\hat{\rho}_{i_0i_n}$ (the dotted arrows originating from $i_0$) to reach $t'_0$. Specifically, by merging this sub-path with two Phase I-based sub-paths, $\hat{\rho}_{t_0i_0}:=\langle(t_0,i_0)\rangle$ and $\hat{\rho}_{i_nt'_0}:=\langle(i_n,t'_0)\rangle$, a relevant path connecting $t_0$ and $t'_0$, denoted as $\hat{\rho}_{t_0t'_0}:=\langle(t_0,i_0),\hat{\rho}_{i_0i_n},(i_n,t'_0)\rangle$, is obtained.
\end{example}
\vspace{3pt}
 From the set of all buses reachable from both $i$ and $j$, the intermediary bus $\lambda$ that provides the shortest complete path, formed by combining $\rho_{i\lambda}$ and $\rho_{\lambda j}$, should be selected to avoid \textit{non-simple} paths, i.e., traversing a bus more than once, and thereby prevent loose upper bounds. This implies that Phase II can be applied iteratively, utilizing previously established connections to construct paths to new buses located deeper into expansion trees.
\begin{example}
    Consider the bus pair $[t_0,i_n]$ in Figure~\ref{fig:fig3}b. Although Phase I is not directly applicable---as no established path connects $i_0\in N_G(t_0)$ to buses in $N_G(i_n)$---it can be applied to another bus in the tree, namely $i_0$, to establish bounds $\overline{\theta}_{t_0i_0}=1$ and $\overline{\theta}_{i_0i_n}=4$. Using $\lambda=i_0$ as the intermediary bus and merging these paths, the angle-difference bound associated with the resulting complete path is determined as:\vspace{-4pt}
\begin{equation}
  |\theta_{t_0}-\theta_{i_n}|\leq \underbrace{|\theta_{t_0}-\theta_{i_0}|}_{\substack{\leq \overline{\theta}_{t_0i_0}=1}} + \underbrace{|\theta_{i_0}-\theta_{i_n}|}_{\substack{\leq \overline{\theta}_{i_0i_n} = 4}} \leq 5\,. \label{eq:LSPC21}
\end{equation}
The path obtained between $t_0$ and $i_n$ can be merged with the path $\hat{\rho}_{i_nt'_0}$ in the subsequent iteration of Phase II, thereby building a connection between $t_0$ and $t'_0$.
\end{example}
\begin{algorithm}[H]
    \caption{Longest Shortest-Path Connection Algorithm (LSPC) - Phase II}\label{alg:LSPC2}
    \begin{algorithmic}[1]
    \State {$\textbf{Inputs: } i,j\in \mathcal{B},\gamma,\overline{\theta}_{ij}$} \Comment{SPP or Phase I bounds (or the $\gamma$ bound, if they fail)}
    \State {\textbf{Outputs: } A tighter bound on the angle difference between buses $i$ and $j$}
            \ForAll {$ \lambda \in \mathcal{B}\backslash\{i,j\}$}
                \If {$ \overline{\theta}_{i\lambda} < \gamma$ \textbf{ and } $\overline{\theta}_{\lambda j} < \gamma$}
                    \State $\overline{\theta}_{ij} \gets \min\{\overline{\theta}_{ij},\overline{\theta}_{i\lambda} + \overline{\theta}_{\lambda j}\}$
                \EndIf
            \EndFor
\end{algorithmic}
\end{algorithm}
\begin{proposition}\label{lemma:lemma_LSPC2}
Consider buses $i$ and $j$ that are disconnected in the initial network, with at least one of them belonging to an expansion tree. If bounds $\overline{\theta}_{i\lambda}$ and $\overline{\theta}_{\lambda j}$ can be determined for at least one $\lambda \in \mathcal{B} \backslash \{i, j\}$, then $\overline{\theta}_{ij}$ can be obtained as:\vspace{-4pt}
\begin{equation}
|\theta_{i}-\theta_{j}| \leq \min_{\substack{\lambda \in \mathcal{B}\backslash\{i, j\}, (\overline{\theta}_{i \lambda}<\gamma \;\land \;\overline{\theta}_{\lambda j}<\gamma)}} \{\overline{\theta}_{i\lambda}  +  \overline{\theta}_{\lambda j}\}=:w(\hat{\rho}_{ij}). \label{eq:LSPC2main}
\end{equation}
\end{proposition}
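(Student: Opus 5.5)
The plan is to prove validity first and then justify the choice of the minimum over intermediaries, following the telescoping pattern of Example~\ref{ex:ex1} and the proof of Theorem~\ref{lemma:lemma_LSPC}. Fix any $\lambda\in\mathcal{B}\setminus\{i,j\}$ with $\overline{\theta}_{i\lambda}<\gamma$ and $\overline{\theta}_{\lambda j}<\gamma$. These two bounds arise from an SPP in $G^0$, from Phase~I via Theorem~\ref{lemma:lemma_LSPC}, or from an earlier iteration of Phase~II. In every case they are valid upper bounds on $|\theta_i-\theta_\lambda|$ and $|\theta_\lambda-\theta_j|$, holding for every integer-feasible TEP solution in which the relevant connections are realized. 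The triangle inequality
\[
|\theta_i-\theta_j|\le|\theta_i-\theta_\lambda|+|\theta_\lambda-\theta_j|\le \overline{\theta}_{i\lambda}+\overline{\theta}_{\lambda j}
\]
then gives a valid bound for each admissible $\lambda$. Because each bound holds separately, it remains valid to take the minimum over all admissible $\lambda$, which yields \eqref{eq:LSPC2main}. Validity of the earlier bounds will be handled by induction on the Phase~II iteration count. The base case consists of SPP and Phase~I bounds, which are valid by \cite{skolfield2022derivation} and Theorem~\ref{lemma:lemma_LSPC}. The inductive step is exactly the triangle-inequality argument above.

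The second step explains why every admissible $\lambda$ is relevant, and why Phase~II is the only option for these pairs. At least one of $i$ and $j$, say $j$, lies in an expansion tree $T$ with root $r$. By the definition of an expansion tree, $T$ is acyclic and separated from $G^0$ by a single corridor. So every $i$--$j$ path in $G$ passes through $r$ and then follows the unique tree path from $r$ to $j$. Any bus on that path, or the root itself, is therefore a natural intermediary. This recovers the situation in Figure~\ref{fig:fig3}b, where $\lambda=i_0$. I would state this cut-vertex property explicitly. It shows that, whenever some admissible $\lambda$ exists, combining the two sub-bounds captures the only way power can reach $j$, so the composed quantity is a path-based bound in the same sense as in Theorem~\ref{lemma:lemma_LSPC}.

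The third step justifies the minimum rather than the maximum. Theorem~\ref{lemma:lemma_LSPC} takes a maximum because in Phase~I we range over alternative connections, any one of which might be the one that gets built. Here, by contrast, every admissible $\lambda$ yields an inequality that is itself valid, since each of $\overline{\theta}_{i\lambda}$ and $\overline{\theta}_{\lambda j}$ already accounts for the worst case over its own alternatives. Intersecting valid inequalities keeps the result valid, so the tightest of them, the minimum, may be used. I would also remark that choosing the minimizing $\lambda$ avoids the non-simple concatenations mentioned before the proposition. A concatenation that revisits a bus is dominated by the concatenation through the repeated bus, which is another admissible $\lambda$.

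I expect the main obstacle to be the subtle point in the third step. The bound $\overline{\theta}_{i\lambda}$ must be valid in \emph{every} feasible configuration that connects $i$ and $j$, not only in configurations where $i$ and $\lambda$ happen to be connected through the specific paths considered. For a new bus $\lambda$, connectivity between $i$ and $j$ might not force connectivity through $\lambda$. I would resolve this by restricting to intermediaries that lie on the forced segment, namely the root $r$ and buses between $r$ and $j$ in $T$, for which connectivity of $i$ and $j$ implies connectivity of $i$ and $\lambda$. For other intermediaries, I would argue that the bound holds trivially whenever the corresponding buses are disconnected in the realized network, since the angles of an isolated component can then be shifted without affecting feasibility or cost.
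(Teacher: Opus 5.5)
Your first step is exactly the paper's proof. The paper links the sub-paths at $\lambda$, telescopes $|\theta_i-\theta_j|\le|\theta_i-\theta_\lambda|+|\theta_\lambda-\theta_j|$, and keeps the smallest sum. It never asks whether $\lambda$ is actually connected to $i$ and $j$ in the realized network. The obstacle you flag is therefore real, and the paper does not address it, but your resolution has a gap.

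\begin{itemize}
\item Restricting to intermediaries on the forced segment proves only a weaker inequality, since \eqref{eq:LSPC2main} and Algorithm~\ref{alg:LSPC2} minimize over every admissible $\lambda\in\mathcal{B}\setminus\{i,j\}$.
\item The shift argument for the other intermediaries is circular. It only disposes of configurations in which $i$ and $j$ are themselves disconnected. In the problematic case, $i,j$ are connected and $\lambda$ lies in another component. Shifting $\lambda$'s component does not change $\theta_i-\theta_j$, and a shift meeting both sub-bounds exists if and only if $|\theta_i-\theta_j|\le\overline{\theta}_{i\lambda}+\overline{\theta}_{\lambda j}$, which is the claim itself.
\item Your remark that a non-simple concatenation is dominated by the one through the repeated bus is also unsupported. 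That bus need not have finite bounds, and its bounds are worst-case maxima, not sub-walk weights.
\end{itemize}

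The missing idea is to phrase validity combinatorially and then augment the configuration.

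\begin{itemize}
\item For an investment vector $y$, let $G_y$ be $G^0$ plus the built candidate lines, and let $d_y$ be its weighted shortest-path distance.
\item Take as induction hypothesis that $\overline{\theta}_{uv}\ge d_y(u,v)$ for every $y$ under which $u$ and $v$ are connected. This holds for SPP bounds because $G^0\subseteq G_y$. It holds for Phase~I bounds because the first and last lines of a shortest $u$--$v$ path in $G_y$ single out an element of the refined reachability set whose connection weight is at least $d_y(u,v)$.
\item Now let $i,j$ be connected under $y$ and $\lambda$ not. Any meaningful $\overline{\theta}_{i\lambda}$ presupposes that $\lambda$ is reachable from $i$ in $G$. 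So you can add unbuilt candidate lines forming a chain through distinct components of $G_y$, from the component of $i,j$ to that of $\lambda$; call the result $y'$.
\item Each added line is a bridge, so no simple $i$--$j$ path uses it, and $d_{y'}(i,j)=d_y(i,j)$.
\item The triangle inequality for $d_{y'}$ and the hypothesis then give $d_y(i,j)\le\overline{\theta}_{i\lambda}+\overline{\theta}_{\lambda j}$. This re-establishes the hypothesis for the Phase~II bound.
\item Telescoping along a shortest path of $G_y$ gives $|\theta_i-\theta_j|\le d_y(i,j)$ in any feasible solution with investment $y$.
\end{itemize}

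This covers every admissible $\lambda$, so you need neither the restriction to the forced segment nor the cut-vertex argument. Your current hypothesis, stated as angle bounds in feasible solutions, is too weak for this step, because the augmented $y'$ need not admit the same angles.
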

\begin{proof}
With the angle-difference bounds $\overline{\theta}_{i\lambda}$ and $\overline{\theta}_{\lambda j}$ from Phase I, a complete path between $i$ and $j$ can be formed by linking the paths connecting them to $\lambda$. Traversing this full path creates a telescoping effect on the left side, with the sum of the weights of the sub-paths, $\overline{\theta}_{i\lambda}+\overline{\theta}_{\lambda j}$, yielding an upper bound on the overall angle difference:\vspace{-4pt}
\begin{equation}
    |\theta_{i}-\theta_{j}|\leq \underbrace{|\theta_{i}-\theta_{\lambda}|}_{\substack{\leq \overline{\theta}_{i\lambda}}} + \underbrace{|\theta_{\lambda}-\theta_{j}|}_{\substack{\leq \overline{\theta}_{\lambda j}}} \leq \overline{\theta}_{i\lambda}+\overline{\theta}_{\lambda j}.
\end{equation}
The presence of multiple intermediary buses $\lambda$ results in multiple upper bounds. The smallest is selected to yield the strongest valid upper bound from these options, yielding inequality \eqref{eq:LSPC2main}. 
\end{proof}
\section{Conclusion}\label{sec5}
This paper introduces the longest shortest-path connection (LSPC) algorithm to derive tight upper bounds on voltage-angle differences between disconnected bus pairs when expansion involves integrating relatively simple new-bus subnetworks. LSPC is a polynomial-time algorithm that overcomes the practical limitations of the longest path problem (LPP) approach, which the existing literature generally assumes is necessary to obtain provably tight bounds for disconnected bus pairs. This paper also shows that existing path-based valid inequalities for DC transmission expansion planning (DC-TEP), when initialized with LSPC, dominate those derived from the LPP or any other larger initial bound. In future work, we will focus on evaluating the computational benefits of the proposed inequalities in solving large-scale DC-TEP problems.
\section*{Data availability}
This article does not contain any datasets.
\begin{appendices}
\section{}\label{secA1}
We define $\rho_{ij}$ as a candidate path in $G$ connecting buses $i$ and $j$ that includes at least one expansion corridor. Let $N_e(\rho_{ij})$ be the count of such expansion corridors along $\rho_{ij}$. The complete form of the path-based VIs (see \eqref{eq:simplevi}) for candidate paths $\rho_{ij}$ is provided in the following proposition. We also demonstrate that when $w(\rho_{ij})$ exceeds the initial angle-difference bound $\overline{\theta}_{ij}$, it may exclude integer-feasible solutions.
    \begin{proposition} \label{theorem:theo1}The following expression, which provides the complete form of the path-based inequalities, is valid only for candidate paths $\rho_{ij} \in \mathcal{C}^{G}_{ij}$ with $w(\rho_{ij}) < \overline{\theta}_{ij}$:\vspace{-3pt}
        \begin{equation}
        \left|\theta_i-\theta_j\right| \leq w\left(\rho_{ij}\right)+\left(\overline{\theta}_{ij}-w\left(\rho_{ij}\right)\right)\Big(N_e\left(\rho_{ij}\right)-\sum_{(k, l) \in \rho_{ij}} \mathbb{I}_{kl} y_{kl}\Big). \label{eq:dyvi}
        \end{equation}
        Here, the indicator function $\mathbb{I}_{kl}$ takes a value of 1 if $(k,l) \in \rho_{ij} \cap\Omega^1 $ and 0 otherwise.
        \end{proposition}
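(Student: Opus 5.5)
We prove the two halves of Proposition~\ref{theorem:theo1} separately: first that \eqref{eq:dyvi} is valid whenever $w(\rho_{ij})<\overline{\theta}_{ij}$, and then that it can fail otherwise. Throughout, take $\overline{\theta}_{ij}$ to be a valid initial bound, meaning $|\theta_i-\theta_j|\le\overline{\theta}_{ij}$ holds in every integer-feasible solution. This bound comes from the SPP in $G^0$, from LPP, or from Theorem~\ref{lemma:lemma_LSPC}. Set $s:=N_e(\rho_{ij})-\sum_{(k,l)\in\rho_{ij}}\mathbb{I}_{kl}y_{kl}$. This is the number of expansion corridors on $\rho_{ij}$ that are not built, so $s\in\{0,1,\dots,N_e(\rho_{ij})\}$ for integer $y$. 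We then split into cases on $s$.

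\textbf{Case $s=0$.} Every expansion corridor on $\rho_{ij}$ is built. For a built corridor $(k,l)\in\Omega^1$, \eqref{eq:cons5} with $y_{kl}=1$ gives $x_{kl}P_{kl}=\theta_k-\theta_l$. Combined with \eqref{eq:cons3}, this yields $|\theta_k-\theta_l|\le w_{kl}$. Existing corridors on the path satisfy \eqref{eq:capacity-reactance}. Summing along $\rho_{ij}$ telescopes exactly as in Example~\ref{ex:ex1}, giving $|\theta_i-\theta_j|\le w(\rho_{ij})$. This is precisely the right-hand side of \eqref{eq:dyvi} when $s=0$.

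\textbf{Case $s\ge 1$.} Now the path is not fully built and only the initial bound is available. When $\overline{\theta}_{ij}-w(\rho_{ij})>0$, the right-hand side is at least $w(\rho_{ij})+(\overline{\theta}_{ij}-w(\rho_{ij}))=\overline{\theta}_{ij}$. Validity of $\overline{\theta}_{ij}$ then gives the inequality. Together the two cases establish validity under \eqref{eq:condition}. The boundary case $w(\rho_{ij})=\overline{\theta}_{ij}$ is still valid but trivial: the inequality reduces to the initial bound. This is why the strict inequality is stated as the relevant condition.

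\textbf{Necessity.} We expect this to be the main obstacle, since it requires an explicit counterexample rather than a general argument. Suppose $w(\rho_{ij})>\overline{\theta}_{ij}$. Then the coefficient $\overline{\theta}_{ij}-w(\rho_{ij})$ is negative. For $s\ge1$ the right-hand side becomes $w(\rho_{ij})-s\,(w(\rho_{ij})-\overline{\theta}_{ij})$. This is strictly below $\overline{\theta}_{ij}$ once $s\ge1$, and it may even be negative for large $s$. It therefore suffices to exhibit a small instance and one feasible solution in which $|\theta_i-\theta_j|$ equals or approaches $\overline{\theta}_{ij}$ while some corridor of $\rho_{ij}$ is unbuilt. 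We would use a pair $[i,j]$ connected in $G^0$ by a short established path whose lines are loaded to capacity. We would also include a longer candidate path $\rho_{ij}$ containing one expansion corridor, left unbuilt ($y=0$). Choosing demand and generation as in Example~\ref{ex:ex1} makes the SPP bound tight. In that solution $|\theta_i-\theta_j|=\overline{\theta}_{ij}$, but the right-hand side of \eqref{eq:dyvi} is $2\overline{\theta}_{ij}-w(\rho_{ij})<\overline{\theta}_{ij}$. Thus a feasible point is cut off, which shows the condition is necessary.

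The remaining care is to check that the constructed flows satisfy \eqref{eq:cons1}--\eqref{eq:cons6}, which should be a routine calculation.
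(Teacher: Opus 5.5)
Your validity argument for $w(\rho_{ij})<\overline{\theta}_{ij}$, splitting on $s=0$ versus $s\ge 1$, is correct. It also covers a direction the paper's proof does not treat, since the paper only argues the ``only'' part. The gap is in your necessity half. Write the right-hand side of \eqref{eq:dyvi} as $\overline{\theta}_{ij}+(s-1)\bigl(\overline{\theta}_{ij}-w(\rho_{ij})\bigr)$. When $w(\rho_{ij})>\overline{\theta}_{ij}$, this equals $\overline{\theta}_{ij}$ exactly at $s=1$ and falls strictly below $\overline{\theta}_{ij}$ only for $s\ge 2$. Your claim that it is strictly below ``once $s\ge 1$'' is therefore false, and your counterexample does not work. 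With a candidate path containing a single unbuilt expansion corridor you have $s=1$, so the right-hand side is $\overline{\theta}_{ij}$; the value $2\overline{\theta}_{ij}-w(\rho_{ij})$ you wrote corresponds to $s=2$. The point with $|\theta_i-\theta_j|=\overline{\theta}_{ij}$ is then not cut off. In fact no counterexample of that shape exists. If $N_e(\rho_{ij})=1$, then at $s=0$ the inequality reads $|\theta_i-\theta_j|\le w(\rho_{ij})$, which holds by your own telescoping argument. At $s=1$ it is the valid initial bound. So \eqref{eq:dyvi} is valid for such paths whatever the sign of $\overline{\theta}_{ij}-w(\rho_{ij})$.

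The missing ingredient is the paper's requirement $\nu>1$: the path must carry at least two expansion corridors that are left unbuilt. A clean instance goes as follows.
\begin{itemize}
\item Join $i$ and $j$ by a single existing line loaded to capacity, with generation at $i$ and demand at $j$ both equal to $\overline{P}^0_{ij}$. Then $|\theta_i-\theta_j|=w_{ij}=\overline{\theta}_{ij}$. A single line (or equal capacities) is needed, because a series path with unequal capacities cannot be saturated throughout, and Example~\ref{ex:ex1} specifies no injections to borrow.
\item Take $\rho_{ij}=\langle(i,k),(k,j)\rangle$ with $(i,k),(k,j)\in\Omega^1$, $w_{ik}+w_{kj}>w_{ij}$, and $y_{ik}=y_{kj}=0$.
\item Choose $\theta_k$ (e.g., midway between $\theta_i$ and $\theta_j$) so that \eqref{eq:cons5} holds for the two unbuilt corridors.
\end{itemize}
Then $s=2$ and the right-hand side is $2w_{ij}-w_{ik}-w_{kj}<w_{ij}$, so a feasible point is removed. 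With this correction, your explicit construction is sharper than the paper's argument, which only observes that the right-hand side drops below $\overline{\theta}_{ij}$ without exhibiting a feasible solution that attains a larger angle difference. The $N_e(\rho_{ij})=1$ observation above also shows that the ``only'' in the statement must be read as applying to paths with $N_e(\rho_{ij})\ge 2$, excluding the trivial boundary case you already noted.
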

        \begin{proof}
        We will show that absent the condition $w(\rho_{ij})<\overline{\theta}_{ij}$, inequalities \eqref{eq:dyvi} may eliminate integer-feasible solutions. For succinctness, define the variable $\nu:=\left(N_e\left(\rho_{ij}\right)-\sum_{(k,l) \in \rho_{ij}} \mathbb{I}_{kl} y_{kl}\right)$ and use it to reformulate inequality \eqref{eq:dyvi} as\vspace{-3pt}
        \begin{equation}
            |\theta_i-\theta_j|\leq(1-\nu)\; w(\rho_{ij}) + \nu \; \overline{\theta}_{ij}.\label{eq:dyviref}
        \end{equation}
        When $w(\rho_{ij})>\overline{\theta}_{ij}$ and $\nu>1$ (i.e., more than one unbuilt expansion corridor exists along $\rho_{ij}$), the right-hand side of \eqref{eq:dyviref} becomes smaller than the initial bound $\overline{\theta}_{ij}$, despite no path shorter than $\overline{\theta}_{ij}$ having been constructed. Consequently, inequality \eqref{eq:dyviref} becomes invalid.
    \end{proof}
 \section{}\label{secA2}
 Let $w(\hat{\rho}_{ij})$ represent the angle-difference bound obtained using the LSPC algorithm for buses $i$ and $j$. Proposition \ref{prop:prop2} evaluates the LSPC-based bound against the LPP-based bound to compare the tightness of the resulting path-based VIs associated with their respective bounds.
\begin{restatedproposition}{prop:prop2}
Let \(\rho_{ij}\) be a candidate path connecting buses \(i\) and \(j\) in \(G\). Since the path-based VI associated with \(\rho_{ij}\) cannot enforce an angle-difference bound smaller than \(w(\rho_{ij})\), the relevant comparison region is \( |\theta_i-\theta_j| \geq w(\rho_{ij}) \). Over this region, the path-based VI constructed using \(w(\hat{\rho}_{ij})\) dominates those derived with LPP or with any initial bound greater than \(w(\hat{\rho}_{ij})\).
\end{restatedproposition}
\begin{proof}
Consider the path-based VIs derived using $w(\hat{\rho}_{ij})$ and $w(\overline{\rho}_{ij})$ as the initial bound $\overline{\theta}_{ij}$, respectively:\vspace{-6pt}
\begin{equation}
     \sum_{(k, l) \in \rho_{ij}}^{N_e} \mathbb{I}_{kl} y_{kl} \leq N_e(\rho_{ij}) + \frac{w(\rho_{ij})-|\theta_i-\theta_j|}{w(\hat{\rho}_{ij})-w(\rho_{ij})},\label{eq:poly1}
\end{equation}\vspace{-6pt}
\begin{equation}
    \sum_{(k,l) \in \rho_{ij}}^{N_e} \mathbb{I}_{kl} y_{kl} \leq N_e(\rho_{ij}) + \frac{w(\rho_{ij})-|\theta_i-\theta_j|}{w(\overline{\rho}_{ij})-w(\rho_{ij})}.\label{eq:poly2}
\end{equation}
These inequalities can be represented as $\pi x\leq \pi^1_0$ and $\pi x\leq \pi^2_0$. We need to demonstrate that $\pi^1_0$, i.e., the right-hand side of \eqref{eq:poly1}, is smaller than $\pi^2_0$, i.e., the right-hand side of \eqref{eq:poly2} within the effective domain, namely, whenever $|\theta_i-\theta_j|\geq w(\rho_{ij})$.
From the definition of the longest path, we have that $w(\hat{\rho}_{ij})\leq w(\overline{\rho}_{ij})$, with equality indicating the worst-case scenario in which all simple paths between $i$ and $j$ have identical weights. Excluding the case of equality, two distinct cases arise for the resulting path-based VIs to be comparable:\\
\indent \textbf{Case 1:} $w(\rho_{ij})=w(\hat{\rho}_{ij})<w(\overline{\rho}_{ij})$. The path $\rho_{ij}$ offers no further improvement to the LSPC-based bound, and a path-based VI can only be derived setting $\overline{\theta}_{ij}=w(\overline{\rho}_{ij})$. The VI eliminates solutions from the relaxed problem's space when $|\theta_i - \theta_j| \geq w(\rho_{ij})$, but this removal is redundant, as the VI $|\theta_i - \theta_j| \leq w(\hat{\rho}_{ij}) = w(\rho_{ij})$ dominates it. Furthermore, the VI becomes ineffective when $|\theta_i - \theta_j| \leq w(\rho_{ij})$, as it is dominated by the trivial \textit{corridor enumeration} VI, formulated as $\sum_{(k,l) \in \rho_{ij}}^{N_e} \mathbb{I}_{kl} y_{kl} \leq N_e(\rho_{ij})$. This is because
\begin{equation}
    N_e(\rho_{ij}) \leq N_e(\rho_{ij}) + \frac{w(\rho_{ij})-|\theta_i - \theta_j|}{w(\overline{\rho}_{ij}) - w(\rho_{ij})} \label{eq:propproof1}
\end{equation}
(the numerator is guaranteed to be non-negative, and the denominator is positive in this case). Therefore, \eqref{eq:poly2} does not provide a tighter VI than the LSPC-based and the corridor enumeration VI.
\newline\indent\textbf{Case 2:} $w(\rho_{ij})< w(\hat{\rho}_{ij}) < w(\overline{\rho}_{ij})$. Both initial bounds $w(\hat{\rho}_{ij})$ and $w(\overline{\rho}_{ij})$, can be used to derive the path-based VIs \eqref{eq:poly1} and \eqref{eq:poly2}, respectively. When $|\theta_i-\theta_j|\leq w(\rho_{ij})$, both VIs are dominated by $\sum_{(k,l) \in \rho_{ij}} \mathbb{I}_{kl} y_{kl} \leq N_e(\rho_{ij})$, that is,
\begin{equation}
    N_e(\rho_{ij}) \leq N_e(\rho_{ij}) + \frac{w(\rho_{ij})-|\theta_i-\theta_j|}{w(\overline{\rho}_{ij})-w(\rho_{ij})} \leq N_e(\rho_{ij}) + \frac{w(\rho_{ij})-|\theta_i-\theta_j|}{w(\hat{\rho}_{ij})-w(\rho_{ij})},\label{eq:propproof2}
\end{equation}
since the fractions have an identical numerator, and $w(\overline{\rho}_{ij})-w(\rho_{ij})>w(\hat{\rho}_{ij})-w(\rho_{ij})>0$. Conversely, when $|\theta_i-\theta_j|\geq w(\rho_{ij})$, the LSPC-based VI dominates the other two, as a larger fraction term is subtracted from $N_e(\rho_{ij})$:
\begin{equation}
    N_e(\rho_{ij})+ \frac{w(\rho_{ij})-|\theta_i-\theta_j|}{w(\hat{\rho}_{ij})-w(\rho_{ij})} \leq N_e(\rho_{ij})+ \frac{w(\rho_{ij})-|\theta_i-\theta_j|}{w(\overline{\rho}_{ij})-w(\rho_{ij})}\leq N_e(\rho_{ij}).
\end{equation}
This demonstrates that when $|\theta_i-\theta_j|\geq w(\rho_{ij})$, the LSPC-based VI provides a tighter bound than that derived using $w(\overline{\rho}_{ij})$ or any path longer than $w(\hat{\rho}_{ij})$.

Consequently, within the effective domain, the LPP-based VIs are either dominated by LSPC-based angle-difference VIs (case 1) or the path-based VIs with LSPC-based initial bounds (case 2).
\end{proof}
\end{appendices}
\bibliographystyle{unsrtnat}
\bibliography{references}

@article{lumbreras2016new,
  title={The new challenges to transmission expansion planning. Survey of recent practice and literature review},
  author={Lumbreras, Sara and Ramos, Andr{\'e}s},
  journal={Electric Power Systems Research},
  volume={134},
  pages={19--29},
  year={2016},
  doi = {10.1016/j.epsr.2015.10.013},
  publisher={Elsevier}
}

@article{kirby2007evaluating,
  title={Evaluating transmission costs and wind benefits in Texas: Examining the ERCOT CREZ transmission study},
  author={Kirby, Brendan},
  journal={The Wind Coalition and Electric Transmission Texas, LLC, Texas PUC Docket},
  number={33672},
  year={2007}
}

@incollection{du2023renewable,
  title={Renewable Integration at ERCOT},
  author={Du, Pengwei},
  booktitle={Renewable Energy Integration for Bulk Power Systems: ERCOT and the Texas Interconnection},
  pages={1--26},
  address={Switzerland},
  year={2023},
  publisher={Springer}
}

@article{kocuk2016cycle,
  title={A cycle-based formulation and valid inequalities for DC power transmission problems with switching},
  author={Kocuk, Burak and Jeon, Hyemin and Dey, Santanu S and Linderoth, Jeff and Luedtke, James and Sun, Xu Andy},
  journal={Operations Research},
  volume={64},
  number={4},
  pages={922--938},
  year={2016},
  doi = {10.1287/opre.2015.1471},
  publisher={INFORMS}
}

@article{binato2001new,
  title={A new Benders decomposition approach to solve power transmission network design problems},
  author={Binato, Silvio and Pereira, M{\'a}rio Veiga F and Granville, S{\'e}rgio},
  journal={IEEE Transactions on Power Systems},
  volume={16},
  number={2},
  pages={235--240},
  year={2001},
  doi = {10.1109/59.918292},
  publisher={IEEE}
}

@article{oertel2014complexity,
  title={Complexity of transmission network expansion planning: NP-hardness of connected networks and MINLP evaluation},
  author={Oertel, David and Ravi, R},
  journal={Energy systems},
  volume={5},
  number={1},
  pages={179--207},
  year={2014},
  doi={doi:10.1007/s12667-013-0091-3},
  publisher={Springer}
}

@book{schrijver2003combinatorial,
  title={Combinatorial optimization: polyhedra and efficiency},
  author={Schrijver, Alexander and others},
  volume={24},
  number={2},
  address   = {Berlin},
  year={2003},
  publisher={Springer}
}

@article{garver1970transmission,
  title={Transmission network estimation using linear programming},
  author={Garver, Len L},
  journal={IEEE Transactions on power apparatus and systems},
  number={7},
  pages={1688--1697},
  year={1970},
  doi = {10.1109/TPAS.1970.292825},
  publisher={IEEE}
}

@article{dey2022node,
  title={Node-based valid inequalities for the optimal transmission switching problem},
  author={Dey, Santanu S and Kocuk, Burak and Redder, Nicole},
  journal={Discrete Optimization},
  volume={43},
  pages={100683},
  year={2022},
  doi = {10.1016/j.disopt.2021.100683},
  publisher={Elsevier}
}

@article{skolfield2022derivation,
  title={Derivation and generation of path-based valid inequalities for transmission expansion planning},
  author={Skolfield, J Kyle and Escobar, Laura M and Escobedo, Adolfo R},
  journal={Annals of Operations Research},
  volume={312},
  number={2},
  pages={1031--1049},
  year={2022},
  doi = {10.1007/s10479-022-04643-1},
  publisher={Springer}
}

@article{pan2020deepopf,
  title={Deepopf: A deep neural network approach for security-constrained dc optimal power flow},
  author={Pan, Xiang and Zhao, Tianyu and Chen, Minghua and Zhang, Shengyu},
  journal={IEEE Transactions on Power Systems},
  volume={36},
  number={3},
  pages={1725--1735},
  year={2020},
  publisher={IEEE}
}

@inproceedings{minot2016parallel,
  title={A parallel primal-dual interior-point method for DC optimal power flow},
  author={Minot, Ariana and Lu, Yue M and Li, Na},
  booktitle={2016 Power Systems Computation Conference (PSCC)},
  pages={1--7},
  year={2016},
  doi = {10.1109/PSCC.2016.7540826},
  organization={IEEE}
}

@article{kargarian2016toward,
  title={Toward distributed/decentralized DC optimal power flow implementation in future electric power systems},
  author={Kargarian, Amin and Mohammadi, Javad and Guo, Junyao and Chakrabarti, Sambuddha and Barati, Masoud and Hug, Gabriela and Kar, Soummya and Baldick, Ross},
  journal={IEEE Transactions on Smart Grid},
  volume={9},
  number={4},
  pages={2574--2594},
  year={2016},
  doi = {10.1109/TSG.2016.2614904},
  publisher={IEEE}
}

@article{horsch2018linear,
  title={Linear optimal power flow using cycle flows},
  author={H{\"o}rsch, Jonas and Ronellenfitsch, Henrik and Witthaut, Dirk and Brown, Tom},
  journal={Electric Power Systems Research},
  volume={158},
  pages={126--135},
  year={2018},
  doi = {10.1016/j.epsr.2017.12.034},
  publisher={Elsevier}
}

@article{abdi2021metaheuristics,
  title={Metaheuristics and transmission expansion planning: A comparative case study},
  author={Abdi, Hamdi and Moradi, Mansour and Lumbreras, Sara},
  journal={Energies},
  volume={14},
  number={12},
  pages={3618},
  year={2021},
  doi = {10.3390/en14123618},
  publisher={MDPI}
}

@inproceedings{mohammadi2013benders,
  title={A benders decomposition approach to corrective security constrained OPF with power flow control devices},
  author={Mohammadi, Javad and Hug, Gabriela and Kar, Soummya},
  booktitle={2013 IEEE Power \& Energy Society General Meeting},
  pages={1--5},
  year={2013},
  doi = {10.1109/PESMG.2013.6672684},
  organization={IEEE}
}

@inproceedings{gopalakrishnan2012global,
  title={Global optimization of optimal power flow using a branch \& bound algorithm},
  author={Gopalakrishnan, Ajit and Raghunathan, Arvind U and Nikovski, Daniel and Biegler, Lorenz T},
  booktitle={2012 50th Annual Allerton Conference on Communication, Control, and Computing (Allerton)},
  pages={609--616},
  year={2012},
  doi = {10.1109/Allerton.2012.6483274},
  organization={IEEE}
}

@inproceedings{sousa2011heuristic,
  title={A heuristic method based on the branch and cut algorithm to the transmission system expansion planning problem},
  author={Sousa, Aldir S and Asada, Eduardo N},
  booktitle={2011 IEEE Power and Energy Society General Meeting},
  pages={1--6},
  year={2011},
  doi = {10.1109/PES.2011.6039826},
  organization={IEEE}
}

@article{haffner2000branch,
  title={Branch and bound algorithm for transmission system expansion planning using a transportation model},
  author={Haffner, S and Monticelli, A and Garcia, A and Mantovani, J and Romero, R},
  journal={IEE Proceedings-Generation, Transmission and Distribution},
  volume={147},
  number={3},
  pages={149--156},
  year={2000},
  publisher={IET}
}

@inproceedings{megel2016reducing,
  title={Reducing the computational effort of stochastic multi-period DC optimal power flow with storage},
  author={M{\'e}gel, Olivier and Andersson, G{\"o}ran and Mathieu, Johanna L},
  booktitle={2016 Power Systems Computation Conference (PSCC)},
  pages={1--7},
  year={2016},
  doi = {10.1109/PSCC.2016.7541033},
  organization={IEEE}
}

@inproceedings{sahraei2014performance,
  title={Performance of AC and DC based transmission switching heuristics on a large-scale polish system},
  author={Sahraei-Ardakani, Mostafa and Korad, Akshay and Hedman, Kory W and Lipka, Paula and Oren, Shmuel},
  booktitle={2014 IEEE PES General Meeting| Conference \& Exposition},
  pages={1--5},
  year={2014},
  doi = {10.1109/PESGM.2014.6939776},
  organization={IEEE}
}

@article{lorca2016multistage,
  title={Multistage adaptive robust optimization for the unit commitment problem},
  author={Lorca, Alvaro and Sun, X Andy and Litvinov, Eugene and Zheng, Tongxin},
  journal={Operations Research},
  volume={64},
  number={1},
  pages={32--51},
  year={2016},
  doi = {10.1287/opre.2015.1456},
  publisher={INFORMS}
}

@book{villasana1984transmission,
  title={Transmission network planning using linear and linear mixed integer programming},
  author={Villasana, Ramon V},
  address = {United States},
  publisher={Rensselaer Polytechnic Institute},
  year={1984}
}

@inproceedings{tsamasphyrou2000transmission,
  title={Transmission network planning under uncertainty with Benders decomposition},
  author={Tsamasphyrou, Panagiota and Renaud, Arnaud and Carpentier, Pierre},
  booktitle={Optimization: Proceedings of the 9th Belgian-French-German Conference on Optimization Namur, September 7--11, 1998},
  pages={457--472},
  year={2000},
  doi = {10.1007/978-3-642-57014-8_30},
  organization={Springer}
}

@book{wolsey2014integer,
  title={Integer and combinatorial optimization},
  author={Wolsey, Laurence A and Nemhauser, George L},
  address= {USA},
  year={2014},
  isbn = {978-1-118-62686-3},
  publisher={John Wiley \& Sons}
}

@article{madrigal2012transmission,
  title={Transmission expansion for renewable energy scale-up: emerging lessons and recommendations},
  author={Madrigal, Marcelino and Stoft, Steven},
  year={2012},
  doi = {10.1596/978-0-8213-9598-1},
  publisher={World Bank Publications}
}

@article{wan2025grid,
  title={Grid Operational Benefit Analysis of Data Center Spatial Flexibility: Congestion Relief, Renewable Energy Curtailment Reduction, and Cost Saving},
  author={Wan, Haoxiang and Fang, Linhan and Li, Xingpeng},
  journal={arXiv preprint arXiv:2511.08759},
  year={2025}
}

@techreport{ACEG2023ReadyToGo,
  author      = {Zachary Zimmerman and Michael Goggin and Rob Gramlich},
  title       = {Ready-To-Go Transmission Projects 2023: Progress and Status since 2021},
  institution = {Americans for a Clean Energy Grid and Grid Strategies},
  year        = {2023},
  url         = {https://cleanenergygrid.org/wp-content/uploads/2023/09/ACEG_Transmission-Projects-Ready-To-Go_September-2023.pdf}
}

@book{merris2011graph,
  title={Graph theory},
  author={Merris, Russell},
  address= {New York, NY, USA},
  year={2011},
  publisher={John Wiley \& Sons}
}

@article{albert2004structural,
  title={Structural vulnerability of the North American power grid},
  author={Albert, R{\'e}ka and Albert, Istv{\'a}n and Nakarado, Gary L},
  journal={Physical review E},
  volume={69},
  number={2},
  pages={025103},
  doi = {10.1103/PhysRevE.69.025103},
  year={2004},
  publisher={APS}
}

@phdthesis{rahmani2013study,
  title={Study of New Mathematical Models for Transmission Expansion Planning Problem},
  author={Rahmani, M},
  year={2013},
  school={Ilha Solteira UNESP S{\~a}o Paulo, Brazil}
}

@article{lumbreras2014automatic,
  title={Automatic selection of candidate investments for Transmission Expansion Planning},
  author={Lumbreras, Sara and Ramos, Andreas and S{\'a}nchez, Pedro},
  journal={International Journal of Electrical Power \& Energy Systems},
  volume={59},
  pages={130--140},
  doi = {10.1016/j.ijepes.2014.02.016},
  year={2014},
  publisher={Elsevier}
}

@article{bastianel2025identification,
  title={Identification of Technical Design Constraints and Considerations for Transmission Grid Expansion Planning Projects},
  author={Bastianel, Giacomo and Hardy, Clement and Charels, Nils and Van Hertem, Dirk and Ergun, Hakan},
  journal={arXiv preprint arXiv:2512.13496},
  year={2025}
}

@article{fadly2019geographical,
  title={Geographical proximity and renewable energy diffusion: An empirical approach},
  author={Fadly, Dalia and Fontes, Francisco},
  journal={Energy Policy},
  volume={129},
  pages={422--435},
  year={2019},
  publisher={Elsevier}
}

@techreport{Nerc2023,
  author       = {{North American Electric Reliability Corporation}},
  title        = {An Introduction to Inverter-Based Resources on the Bulk Power System},
  institution  = {NERC},
  year         = {2023},
  month        = dec
}

@book{cormen2022introduction,
  title = {Introduction to Algorithms},
  author = {Cormen, Thomas H. and Leiserson, Charles E. and Rivest, Ronald L. and Stein, Clifford},
  year = {2022},
  edition = {4th},
  address = {Cambridge, MA},
  publisher = {MIT Press},
  isbn = {9780262046305}
}

@article{dong2025transmission,
  title={Transmission expansion planning: A deep learning approach},
  author={Dong, Jizhe and Cao, Jianshe and Lu, Yu and Zhang, Yuexin and Li, Jiulong and Xu, Chongshan and Zheng, Danchen and Han, Shunjie},
  journal={Sustainable Energy, Grids and Networks},
  volume={41},
  pages={101585},
  year={2025},
  doi = {10.1016/j.segan.2024.101585},
  publisher={Elsevier}
}

@article{romero2005constructive,
  title={Constructive heuristic algorithm for the DC model in network transmission expansion planning},
  author={Romero, R and Rocha, C and Mantovani, JRS and Sanchez, IG},
  journal={IEE Proceedings-Generation, Transmission and Distribution},
  volume={152},
  number={2},
  pages={277--282},
  year={2005},
  publisher={IET}
}

@article{de2005transmission,
  title={Transmission system expansion planning using a sigmoid function to handle integer investment variables},
  author={de Oliveira, Edimar Jos{\'e} and Da Silva, IC and Pereira, Jos{\'e} Luiz Rezende and Carneiro, S},
  journal={IEEE Transactions on Power Systems},
  volume={20},
  number={3},
  pages={1616--1621},
  year={2005},
  doi = {10.1109/TPWRS.2005.852065},
  publisher={IEEE}
}

@article{hedman2010co,
  title={Co-optimization of generation unit commitment and transmission switching with N-1 reliability},
  author={Hedman, Kory W and Ferris, Michael C and O'Neill, Richard P and Fisher, Emily Bartholomew and Oren, Shmuel S},
  journal={IEEE Transactions on Power Systems},
  volume={25},
  number={2},
  pages={1052--1063},
  doi = {10.1109/TPWRS.2009.2037232},
  year={2010},
  publisher={IEEE}
}

\end{document}